
%
\documentclass[conference]{IEEEtran}
\usepackage{amsfonts}
\usepackage{amsmath}
\usepackage{verbatim}
\usepackage{setspace}
\usepackage[caption=false,font=footnotesize]{subfig}

\newtheorem{theorem}{Theorem}
\newtheorem{lemma}{Lemma}
\newtheorem{proposition}{Proposition}
\newtheorem{corollary}{Corollary}



%


%

%
\usepackage{cite}

%
\ifCLASSINFOpdf
   \usepackage[pdftex]{graphicx}
\else
\fi
\newcommand{\los}{{\mathrm{L}}}
\newcommand{\nlos}{{\mathrm{N}}}


\def\bb0{{\mathbb{0}}}


\def\ba{{\mathbf{a}}}
\def\bb{{\mathbf{b}}}

\def\bff{{\mathbf{f}}}

\def\bn{{\mathbf{n}}}

\def\br{{\mathbf{r}}}

\def\bw{{\mathbf{w}}}

\def\b0{{\mathbf{0}}}


\def\bH{{\mathbf{H}}}


\def\bbE{{\mathbb{E}}}




\def\sf0{{\mathsf{0}}}

\hyphenation{op-tical net-works semi-conduc-tor wire-less sto-chas-tic geo-met-ry clus-ter}
\usepackage{algorithm}
\usepackage{algpseudocode}
\usepackage{epstopdf}
\usepackage{balance}
\usepackage{bbm}
\IEEEoverridecommandlockouts

\begin{document}
%
\title{Millimeter Wave Energy Harvesting}



\author{Talha Ahmed Khan, Ahmed Alkhateeb, and Robert W. Heath Jr. \thanks{The authors are with the Wireless Networking and Communications Group at The University of Texas at Austin (Email: \{talhakhan, aalkhateeb, rheath\}@utexas.edu).} \thanks{This work was supported in part by the Army Research Office under grant W911NF-14-1-0460, and gifts from Mitsubishi Electric Research Labs, Cambridge and Nokia.} \thanks{Part of this work was presented at IEEE GLOBECOM Workshops, 2015~\cite{talha2015gc}.}}
\maketitle
\setcounter{page}{1} \thispagestyle{plain}
\begin{abstract}
The millimeter wave (mmWave) band, a prime candidate for 5G cellular networks, seems attractive for wireless energy harvesting since it will feature large antenna arrays and extremely dense base station (BS) deployments. 
The viability of mmWave for energy harvesting though is unclear, due to the differences in propagation characteristics such as extreme sensitivity to building blockages. 
This paper considers a scenario where low-power devices extract energy and/or
information from the mmWave signals. 
Using stochastic geometry, analytical expressions are derived for the energy coverage probability, the average harvested
power, and the overall (energy-and-information) coverage probability at a typical wireless-powered
device in terms of the BS density, the antenna geometry
parameters, and the channel parameters. 
Numerical results reveal several network and device level
design insights. At the BSs, optimizing the antenna geometry parameters such as beamwidth can maximize the network-wide energy coverage for a given user population. 
At the device level, the performance can be substantially improved by optimally splitting the received
signal for energy and information extraction, and by deploying multi-antenna arrays. 
For the latter, an efficient low-power multi-antenna mmWave receiver architecture is proposed for simultaneous energy
and information transfer. 
Overall, simulation results suggest that mmWave energy harvesting generally
outperforms lower frequency solutions.
\end{abstract}

%
\IEEEpeerreviewmaketitle

\section{Introduction}
Millimeter wave (mmWave) communications is a key candidate technology for future 5G cellular networks. This is mainly due to the availability of large spectrum resources at higher frequencies, which leads to much higher data rates. Recent research suggests that mmWave systems will typically feature (i) large-dimensional antenna arrays with directional beamforming at the transmitter/receiver---which is motivated by the small wavelength that allows packing a large number of antenna elements into small form-factors; and (ii) a dense deployment of base stations (BSs) to ensure a comparable coverage to ultra high frequency (UHF) networks\cite{rappaport2014millimeter,bai2014b}. These mmWave design features are also attractive for RF (radio frequency) energy harvesting where a harvesting device may extract energy from the incident RF signals\cite{EnergyHarvestWirelessCommSurvey2015}. 
This could potentially power the massive number of low-power wireless devices in future paradigms such as the Internet of Things\cite{IoT2014}.  
The signal propagation at mmWave frequencies, however, suffers from poor penetration and diffraction characteristics, making it sensitive to blockage by buildings\cite{bai2014b,rangan2014millimeter}. 
It is, therefore, unclear if mmWave cellular networks will be more favorable for RF energy harvesting compared to the conventional (below 6 GHz) frequencies. 
Further, the network level design principles for mmWave energy harvesting systems are not well understood. 
This motivates a network view of energy harvesting in a mmWave cellular network.

\subsection{Contributions}
In this paper, we provide a tractable framework to characterize the performance of wireless energy and information transfer aided by a large-scale mmWave cellular network. 
Our analysis accounts for the key distinguishing features of mmWave systems, namely the sensitivity to blockage and the use of potentially large antenna arrays at the transmitter/receiver. 
We first consider mmWave energy harvesting, where devices only extract energy from the incident mmWave signals. Our analysis models two operating scenarios, one where devices have their beams aligned to that of a mmWave BS, and the other where no such beam alignment is assumed. 
For both operating modes, we derive simple analytical expressions for metrics such as the energy coverage probability and the average harvested power using tools from stochastic geometry. We then extend the analysis to characterize the overall (energy-and-information) coverage probability for the general case where a device extracts both energy and information from the mmWave signals.   

To get design insights, we examine the network level performance trends in terms of key parameters such as the mmWave network density and the antenna geometry parameters for both operating modes of the energy harvesting devices.
Numerical results suggest that narrower antenna beams are preferred when the users are aligned with a BS, whereas wider beams are favorable when no beam alignment is assumed. 
Our findings also suggest that there typically exists an optimum transmit antenna beamwidth that maximizes the network-wide energy coverage for a given user population. This implies that the mmWave BSs will need to adapt the antenna beam patterns depending on the fraction of the users operating in each mode. 

Similar to the BS-related parameters, we also investigate the role of the device-related parameters on the system performance. 
For example, the overall (energy-and-information) coverage probability can be improved by optimizing over a design parameter (power splitting ratio) to optimally portion the received signal between the energy harvesting and the information decoding modules. Another important design feature at the user is the receive antenna array.
Similar to the BSs, the mmWave users can, in principle, benefit from using large antenna arrays. For low-power energy harvesting devices, however, the associated antenna circuity could increase the power consumption, offsetting the potential gains of large antenna arrays.       
To leverage multiple antennas at the receiver without 
resorting to power-hungry circuit components, we propose a simple switch-based receiver architecture for simultaneous energy and information transfer. 
Simulation results reveal that the proposed low-power solution performs reasonably well compared to more advanced but power-hungry receiver architectures.

\subsection{Related Work}
Wireless energy harvesting is becoming increasingly feasible due to the reduction in the power consumption requirements of wireless sensors and the improvements in energy harvesting technologies \cite{talla2015powering,GollakotaRF,RFsurveyLondon,valenta2014}. 
This has also led to considerable research in advancing the theoretical understanding of wireless-powered systems (see \cite{EnergyHarvestWirelessCommSurvey2015,WPCSurvey2015} for a comprehensive overview). For example, wireless energy and information transfer has been studied for different information-theoretic setups such as a broadcast channel\cite{zhang2013mimo}, a fading channel\cite{SWIPTfading}, and an interference channel \cite{SWIPTInter}. Many of these papers highlight the fundamental trade-off between energy and information transfer efficiency and characterize the achievable rate-energy regions for different practical receiver architectures\cite{WPCSurvey2015}.    

Wireless energy and/or information transfer in large-scale networks has also been investigated\cite{flint2015,huang2013cog,ekram2015,kaibin2014cellular,kaibinlarsson2013,krikidis2014swipt}.
In \cite{flint2015}, the performance of ambient RF energy harvesting was characterized using tools from stochastic geometry. Using a repulsive point process to model RF transmitters, it was shown that more repulsion helps improve the performance at an energy harvester for a given transmitter density. 
In \cite{huang2013cog,ekram2015}, cognitive radio networks were considered, and opportunistic wireless energy harvesting was proposed and analyzed. 
In \cite{kaibin2014cellular}, a hybrid cellular network architecture was proposed to enable wireless power transfer for mobiles. In particular, an uplink cellular network was overlaid with power beacons and trade-offs between the transmit power and deployment densities were investigated under an outage constraint on the data links. 
A broadband wireless network with transmit beamforming was considered in \cite{kaibinlarsson2013}, where optimal power control algorithms were devised for improving the throughput and power transfer efficiency.
Simultaneous information and energy transfer in a relay-aided network was considered in \cite{krikidis2014swipt}. Under a random relay selection strategy, the network-level performance was characterized in terms of the relay density and the relay selection area.  

Our work differs from the prior work in that we investigate wireless energy and information transfer in a large-scale \textit{mmWave} cellular network.
Due to the different physical characteristics and design features at mmWave, prior work on energy/information transfer in lower frequency networks does not directly apply to mmWave networks. 
In another line of work, the performance of mmWave cellular networks in terms of signal-to-interference-and-noise ratio (SINR) coverage and rate has also been analyzed using stochastic geometry\cite{bai2015,Singh2015}. None of this work on mmWave networks, however, provides a performance characterization from the perspective of wireless energy and information transfer.

The paper is organized as follows. In Section \ref{secSys}, we introduce the system model. Section \ref{secEH} presents the analytical results for mmWave energy harvesting. The case with simultaneous information and energy transfer is treated in Section \ref{secSWIPT}. We conclude the paper in Section \ref{secConc}.

\section{System Model}\label{secSys}
In this section, we introduce the network and channel models, followed by a description of the antenna model.
The parameters defined in this section are summarized in Table \ref{table1}.
\subsection{Network Model}\label{secNet}
We consider a large-scale cellular network consisting of mmWave BSs and a population of wireless-powered devices (or users) that operate by extracting energy and/or information in the mmWave band. 
The mmWave BSs are located according to a homogeneous Poisson point process (PPP) $\Phi(\lambda)$ of density $\lambda$.
The user population is drawn from another homogeneous PPP $\Phi_u(\lambda_u)$ of density $\lambda_u$, independently of $\Phi$.
In general, mmWave BSs and users may be located outdoors or indoors. Empirical evidence suggests that mmWave signals exhibit high penetration losses for many common building materials\cite{rangan2014millimeter,bai2015}. Assuming the building blockages to be impenetrable, we focus on the case where the BSs and users are located outdoors. 
We say that a BS-user link is line-of-sight (LOS) or non-line-of-sight (NLOS) depending on whether or not it is intersected by a building blockage. Channel measurement campaigns have reported markedly different propagation characteristics for LOS/NLOS links\cite{rappaport2014millimeter,rangan2014millimeter}.
To model blockage due to buildings, we leverage the results in \cite{bai2014} where the buildings are drawn from a boolean stochastic point process. 
We define a line-of-sight (LOS) probability function $p(r)=e^{-\beta r}$ for a link of length $r$, where $\beta$ is a constant that depends on the geometry and density of the building blockage process: a BS-receiver link of length $r$ is declared LOS with a probability $p(r)$, independently of other links. While conducting stochastic geometry analysis, we will apply this result to split the BS PPP into two independent but non-homogeneous PPPs consisting of LOS and NLOS BSs.

We allow the user population to consist of two types of users, namely \emph{connected} and \emph{nonconnected}. A connected user is assumed to be tagged with the BS, either LOS or NLOS, that maximizes the average received power at that user. Moreover, for the connected case, we assume perfect beam alignment between a BS and its tagged user, i.e., the BS and user point their beams  so as to have the maximum directivity gain. Further, we assume that a BS serves only one connected user at a given time. For a nonconnected user, we do not assume any prior beam alignment with a BS, i.e., it is not tagged with any BS.
This allows us to model a wide range of scenarios. For instance, due to limited resources, the mmWave network may (directly) serve only a fraction of the user population as connected users, leaving the rest in the nonconnected mode. Another interpretation could be that due to the challenges associated with channel acquisition, not all the users could be simultaneously served in the connected mode. We let $\epsilon$ be the probability that a randomly selected node is a connected user, independently of other nodes. With this assumption, we can thin the user PPP $\Phi_{u}$ into two independent PPPs $\Phi_{u,\textrm{con}}$ and $\Phi_{u,\textrm{ncon}}$, with respective densities $\epsilon\lambda_u$ and $(1-\epsilon)\lambda_u$. Note that an arbitrary user, either connected or nonconnected, may experience an energy outage if the received power falls short of a required threshold $\psi$. 
This threshold would depend on the power consumption requirements of the receiver. 
To capture the sensitivity requirements of the harvesting circuit, we define $\psi_\textrm{min}$ to be the harvester activation threshold, i.e., the minimum received energy needed to activate the harvesting circuit (the energy outage threshold $\psi$ would typically be greater than $\psi_\textrm{min}$).
We use $\xi$ to denote the rectifier efficiency.
We define $\textrm{P}_\textrm{con}\left(\lambda,\psi_\textrm{con}\right)$ to be the energy coverage probability given an outage threshold $\psi_\textrm{con}$ for a connected user, while $\textrm{P}_\textrm{ncon}\left(\lambda,\psi_\textrm{ncon}\right)$ denotes the same for the nonconnected case.
With these definitions, we can define the overall energy coverage probability $\Lambda(\epsilon,\lambda,\psi_\textrm{con},\psi_\textrm{ncon})$ of the network as
\begin{small}\begin{align}
\Lambda(\epsilon,\lambda,\psi_\textrm{con},\psi_\textrm{ncon})=\epsilon\textrm{P}_\textrm{con}\left(\lambda,\psi_\textrm{con}\right)+
(1-\epsilon)\textrm{P}_\textrm{ncon}\left(\lambda,\psi_\textrm{ncon}\right)
\end{align}\end{small}where the energy coverage probability is a function of several parameters such as the BS density, the channel propagation parameters, as well as the antenna beam patterns at the transmitter/receiver. 
For cleaner exposition, we drop the subscript in  $\psi_\textrm{con}$ or  $\psi_\textrm{ncon}$, using the notation $\Lambda(\epsilon,\lambda,\psi)$ when the context is clear.
In Section \ref{secStoch}, we provide analytical expressions to compute the energy coverage probability in a mmWave network.
\subsection{Channel Model}\label{secCha}
We now describe the channel model for an arbitrary user without losing generality. Empirical evidence suggests that mmWave frequencies exhibit different propagation characteristics for the LOS/NLOS links\cite{rangan2014millimeter}. While the LOS mmWave signals propagate as if in free space, the NLOS mmWave signals typically exhibit a higher path loss exponent (and additional shadowing)\cite{rangan2014millimeter}. We let $\alpha_\los$ and $\alpha_\nlos$ be the path loss exponents for the LOS and NLOS links respectively. We define the distance-dependent path loss for a user located a distance $r_\ell$ from the $\ell$-th BS: $g_\ell(r_\ell)=C_\los r_\ell^{-\alpha_\los}$ when the link is LOS, where the constant $C_\los$ is the path loss intercept; and $g_\ell(r_\ell)=C_\nlos r_\ell^{-\alpha_\nlos}$ for the NLOS case. 
Note that by including blockages in our model (Section \ref{secNet}), we capture the distance-dependent signal attenuation due to buildings. To simplify the analysis, we do not include additional forms of shadowing in our model. 
We further define $h_{\ell}$ to be the small-scale fading coefficient corresponding to a BS $\ell\in\Phi$. Assuming independent Nakagami fading for each link, the small-scale fading power $H_\ell=|h_{\ell}|^2$ can be modeled as a normalized Gamma random variable, i.e., $H_{\ell}\,${\raise.17ex\hbox{$\scriptstyle\mathtt{\sim}$}}$ \,\Gamma\left(N_\los,1/N_\los\right)$ when the link is LOS and $H_{\ell}\,
${\raise.17ex\hbox{$\scriptstyle\mathtt{\sim}$}}$\,\Gamma\left(N_\nlos,1/N_\nlos\right)$ for the NLOS case, where the fading parameters $N_\los$ and $N_\nlos$ are assumed to be integers for simplicity. 
\subsection{Antenna Model}\label{secAnt}
To compensate for higher propagation losses, mmWave BSs will use large directional antennas arrays. We assume that the BSs and users are equipped with $N_{\textrm{t}}$ and $N_{\textrm{r}}$ antenna elements each. To simplify the analysis while capturing the key antenna characteristics, we use the sectored antenna model of Fig. \ref{fig:ant_pat} (except for Section IV), similar to the one considered in \cite{bai2015,hunter2008}. We use $A_{M,m,\theta,\bar{\theta}}(\phi)$ to characterize the antenna beam pattern, where $\phi$ gives the angle from the boresight direction, $M$ denotes the directivity gain and $\theta$ the half power beamwidth for the main lobe, while $m$ and $\bar{\theta}$ give the corresponding parameters for the side lobe. 
With this notation, $A_{M_\textrm{t},m_\textrm{t},\theta_\textrm{t},\bar{\theta}_\mathrm{t}}(\cdot)$ denotes the antenna beam pattern at an arbitrary BS in $\Phi$, and  $A_{M_\textrm{r},m_\textrm{r},\theta_\textrm{r},\bar{\theta}_\mathrm{r}}(\cdot)$ denotes the same for an energy harvesting user in $\Phi_u$. We further define $\delta_\ell=A_{M_\textrm{t},m_\textrm{t},\theta_\textrm{t},\bar{\theta}_\mathrm{t}}(\phi^{\ell}_\textrm{t})
A_{M_\textrm{r},m_\textrm{r},\theta_\textrm{r},\bar{\theta}_\mathrm{r}}(\phi^{\ell}_\textrm{r})$, the total directivity gain for the link between the $\ell$-th BS and the typical user; $\phi^{\ell}_\textrm{t}$ and $\phi^{\ell}_\textrm{r}$ give the angle-of-arrival and angle-of-departure of the signal.
\begin{figure} [t]
	\centerline{
		\includegraphics[width=.5\columnwidth]{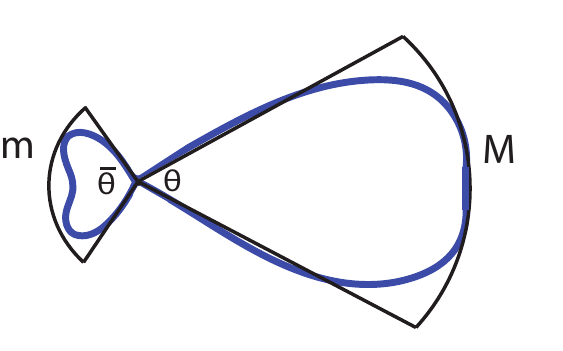}
	}
	\caption{Sectored antenna model. The antenna beam pattern is parameterized by the directivity gains for the main lobe ($M$) and side lobe ($m$), and the half power beamwidths for the main lobe ($\theta$) and side lobe ($\bar{\theta}$).}
	\label{fig:ant_pat}
\end{figure}

Without any further assumptions about the beam alignment between a user and its BS, we model the directivity gain $\delta_\ell$ as a random variable.
We assume the angles $\phi^{\ell}_\textrm{t}$ and $\phi^{\ell}_\textrm{r}$ are uniformly distributed in $[0,2\pi)$. Due to the sectored antenna model, the random variable $\delta_\ell=D_i$ with a probability $p_i$ ($i\in\{1,2,3,4,5\}$), where $D_i\in\{M_\textrm{t}M_\textrm{r},M_\textrm{t}m_\textrm{r},m_\textrm{t}M_\textrm{r},m_\textrm{t}m_\textrm{r},0\}$ with corresponding probabilities $p_i\in\{q_\mathrm{t}{q}_\mathrm{r},q_\mathrm{t}\bar{q}_\mathrm{r},\bar{q}_\mathrm{t}{q}_\mathrm{r},\bar{q}_\mathrm{t}\bar{q}_\mathrm{r},q_o\}$; the constants $q_\mathrm{t}=\frac{\theta_t}{2\pi}$, $\bar{q}_\mathrm{t}=\frac{\bar{\theta}_t}{2\pi}$, $q_\mathrm{r}=\frac{\theta_r}{2\pi}$, $\bar{q}_\mathrm{r}=\frac{\bar{\theta}_r}{2\pi}$, and $q_o=2-q_\mathrm{t}- \bar{q}_\mathrm{t}-q_\mathrm{r}- \bar{q}_\mathrm{r}$. Note that $D_5=0$ models the extreme case where the BS and user beams have no alignment at all.  
Note that for the connected mode, since we assume perfect beam alignment between the typical user and its serving BS (hereby denoted by subscript $0$), the directivity gain $\delta_0=M_\textrm{t}M_\textrm{r}$ due to the sectored antenna model.

\begin{table}
	\caption{Model Parameters}	
	\centering
	\begin{tabular}{| p{0.7in} || p{2.0in}|}
		\hline
		\textbf{Notation} & \textbf{Description} \\ \hline
		$N_\textrm{t}$, $N_\textrm{r}$ & Antenna array size at the transmitter (t) and receiver (r)  \\ \hline
		$M_\textrm{t}$, $M_\textrm{r}$ \newline $m_\textrm{t}$, $m_\textrm{r}$ & Main lobe directivity gain \newline Side lobe directivity gain\\ \hline
		$\theta_\textrm{t}$, $\theta_\textrm{r}$ \newline $\bar{\theta}_\textrm{t}$, $\bar{\theta}_\textrm{r}$ & Main lobe half power beamwidth \newline Side lobe half power beamwidth\\ \hline
		$\Phi(\lambda)$ & BS PPP with density $\lambda$ \\ \hline
		$\Phi_{u}(\lambda_u)$ & User PPP with density $\lambda_u$ \\ \hline
    	$\epsilon$ & Fraction of connected users \\ \hline
    	$\psi$ & Energy outage threshold \\ \hline
    	$\psi_\textrm{min}$ & Harvester activation threshold\\ \hline
 	    $\xi$ & Rectifier efficiency\\ \hline
    	$\Lambda(\epsilon,\lambda,\psi)$ & Energy coverage probability \\ \hline
    	$p(r)$ & LOS probability function \\ \hline
    	$\beta$ & Building blockage parameter \\ \hline
        $\alpha_\los$, $\alpha_\nlos$ & LOS/NLOS path loss exponents \\ \hline
        $C_\los$, $C_\nlos$ & LOS/NLOS path loss intercepts \\ \hline
        $N_\los$, $N_\nlos$ & LOS/NLOS fading parameters \\ \hline	
    	$P_\textrm{t}$ & Transmit power of BSs in $\Phi$ \\ \hline
	\end{tabular}
	\label{table1}
\end{table}
\section{MmWave with Energy Harvesting}\label{secEH}
In this section, we assume that each user is equipped with an energy harvesting circuit, and attempts to extract energy from the incident mmWave signals. No decoding of information is considered in this section. The case with simultaneous information and power transfer is treated in Section \ref{secSWIPT}. 
We first provide analytical expressions to evaluate the energy coverage probabilities for both connected and nonconnected users. We then validate the analytical model, and conclude the section by providing network level design insights.

\subsection{Stochastic Geometry Analysis}
\label{secStoch}
We first provide some lemmas before stating the main analytical results for this section.
\begin{lemma}[{Modified from\cite[Theorem 8]{bai2014}}]\normalfont
The probability density function (PDF) of the distance from an energy harvesting user to its nearest LOS BS, given that the user observes at least one LOS BS, is given by ${\tau}_\los\left(x\right)=2\pi\lambda {B_\los}^{-1}x p(x) e^{-2\pi\lambda\int_{0}^{x}vp(v)\text{d}v}
$, where $x>0$ and ${B_\los}=1-e^{-2\pi\lambda\int_{0}^{\infty}vp(v)\text{d}v}$ is the probability that the receiver observes at least one LOS BS. Similarly, the distance distribution of the link between the user and its nearest NLOS BS, given that the user observes at least one NLOS BS, is given by
${\tau}_\nlos\left(x\right)=2\pi\lambda {B_\nlos}^{-1}x (1-p(x)) e^{-2\pi\lambda\int_{0}^{x}v(1-p(v))\text{d}v}$, where $x>0$ and ${B_\nlos}=1-e^{-2\pi\lambda\int_{0}^{\infty}v(1-p(v))\text{d}v}$ is the probability that the user observes at least one NLOS BS.
\end{lemma}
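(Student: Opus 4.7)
The plan is to exploit the independent thinning property of the Poisson point process (PPP) and then compute the conditional distance distribution via a standard void-probability argument. First, I would fix the typical user at the origin (justified by Slivnyak's theorem for the stationary PPP $\Phi_u$, since the users and BSs are independent, the distribution of $\Phi$ is unaffected by conditioning on the user). Each BS at distance $r$ from the user is declared LOS independently with probability $p(r)=e^{-\beta r}$, so by the independent thinning theorem the LOS BSs form an inhomogeneous PPP $\Phi_\los$ with intensity $\lambda p(\|x\|)$ on $\mathbb{R}^2$. Converting to polar coordinates centered at the user, the mean number of LOS BSs within radius $x$ is $\mu_\los(x)=2\pi\lambda\int_0^x v\,p(v)\,\mathrm{d}v$.

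Next, let $R_\los$ denote the distance to the nearest LOS BS (set $R_\los=\infty$ if no LOS BS exists). Using the void probability of a PPP,
\begin{align}
\Pr(R_\los > x) \;=\; \Pr\bigl(\Phi_\los(\mathcal{B}(0,x))=0\bigr) \;=\; e^{-\mu_\los(x)},
\end{align}
and letting $x\to\infty$ yields $\Pr(R_\los=\infty)=e^{-2\pi\lambda\int_0^\infty v p(v)\,\mathrm{d}v}$, so $B_\los = 1-e^{-2\pi\lambda\int_0^\infty v p(v)\,\mathrm{d}v}$ is indeed the probability that at least one LOS BS exists.

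Then I would compute the conditional CDF by dividing through,
\begin{align}
\Pr(R_\los \leq x \mid R_\los < \infty) \;=\; \frac{1 - e^{-\mu_\los(x)}}{B_\los},
\end{align}
and differentiate in $x$, using $\frac{\mathrm{d}}{\mathrm{d}x}\mu_\los(x) = 2\pi\lambda x p(x)$, to obtain
\begin{align}
\tau_\los(x) \;=\; \frac{2\pi\lambda x\, p(x)\, e^{-2\pi\lambda\int_0^x v p(v)\,\mathrm{d}v}}{B_\los},
\end{align}
which matches the claimed expression. The NLOS statement follows by an identical argument after replacing the thinning probability $p(r)$ by $1-p(r)$, since LOS and NLOS thinnings of a PPP are themselves independent PPPs.

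There is no real obstacle here; the only subtlety worth flagging is the conditioning step, because $R_\los$ is not almost surely finite (with positive probability no LOS BS exists at all), so one must explicitly normalize by $B_\los$ rather than treating $\tau_\los$ as an unconditional density. Everything else is a direct application of independent thinning, the void probability formula, and differentiation under the integral sign.
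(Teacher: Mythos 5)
Your proof is correct and is essentially the standard derivation: independent thinning of $\Phi$ into an inhomogeneous LOS PPP of intensity $\lambda p(\cdot)$, the void probability of the ball of radius $x$, normalization by $B_\los$ to handle the event that no LOS BS exists, and differentiation of the conditional CDF. The paper itself offers no proof of this lemma, importing it directly from the cited reference, and your argument matches the one given there; your remark about explicitly conditioning on $R_\los<\infty$ (since the nearest-LOS distance is not almost surely finite) is exactly the right subtlety to flag.
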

\begin{lemma}[{Modified from \cite[Lemma 2]{bai2015}}]\normalfont
Let $\varrho_\los$ and $\varrho_\nlos$ denote the probability that the energy harvesting user is connected to a LOS and a NLOS BS respectively, then $\varrho_{\mathrm{L}}$ is given by $\varrho_{\mathrm{L}}=B_{\mathrm{L}}\int_{0}^{\infty}e^{-2\pi\lambda\int_{0}^{\rho_{\mathrm{L}}(x)}(1-p(v))v\text{d}v}\tau_{\mathrm{L}}\left(x\right)\text{d}x,
$
where $\rho_{\mathrm{L}}(x)=\left(\frac{C_{\mathrm{N}}}{C_{\mathrm{L}}}\right)^{\frac{1}{\alpha_\nlos}}x^{\frac{\alpha_\los}{\alpha_\nlos}}$ and $\varrho_\nlos=1-\varrho_\los$.
\end{lemma}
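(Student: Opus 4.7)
\emph{Proof plan for Lemma 2.} The strategy is to decompose the BS process $\Phi$ into two independent inhomogeneous PPPs of LOS and NLOS BSs and then, conditioned on the distance to the nearest LOS BS, compute the probability that no NLOS BS offers a larger average received power.

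First I would invoke the independent thinning induced by the LOS indicator $p(r)=e^{-\beta r}$: conditioning on the block process, each BS at distance $r$ is declared LOS independently with probability $p(r)$, so $\Phi$ splits into two independent inhomogeneous PPPs $\Phi_\los$ and $\Phi_\nlos$ with intensity measures $\lambda p(r)$ and $\lambda(1-p(r))$ respectively (in polar coordinates). Lemma 1 already gives me the conditional nearest-point distance densities $\tau_\los(\cdot)$ and $\tau_\nlos(\cdot)$ of these two processes along with the ``at least one point'' probabilities $B_\los$ and $B_\nlos$.

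Next I would translate the association rule into a distance condition. The user is tagged to a LOS BS at distance $X_\los$ iff $C_\los X_\los^{-\alpha_\los} > C_\nlos X_\nlos^{-\alpha_\nlos}$ for every NLOS BS at distance $X_\nlos$, equivalently iff the nearest NLOS BS satisfies $X_\nlos > \rho_\los(X_\los)$, where the threshold $\rho_\los(x)=\left(C_\nlos/C_\los\right)^{1/\alpha_\nlos}x^{\alpha_\los/\alpha_\nlos}$ is exactly the boundary where LOS and NLOS power curves cross. Writing $\varrho_\los=P(\text{LOS BS exists})\cdot P(\text{LOS association} \mid \text{LOS BS exists})$ gives a factor $B_\los$ in front, and the inner probability is an expectation over the conditional density $\tau_\los(x)$ of $X_\los$ of the event $\{X_\nlos>\rho_\los(x)\}$ (where by convention the event holds if no NLOS BS exists at all).

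Then I would evaluate $P(X_\nlos>\rho_\los(x))$ by the void probability of the NLOS PPP on the disk of radius $\rho_\los(x)$: since $\Phi_\nlos$ is an independent inhomogeneous PPP with intensity $\lambda(1-p(v))$, this equals $\exp\!\left(-2\pi\lambda\int_0^{\rho_\los(x)} v(1-p(v))\,\mathrm{d}v\right)$. Here the independence between $\Phi_\los$ and $\Phi_\nlos$ is essential; it lets me treat the void event for $\Phi_\nlos$ separately from the distribution of $X_\los$ coming from $\Phi_\los$, which is the step that could be mishandled. Assembling the pieces yields the claimed integral, and $\varrho_\nlos=1-\varrho_\los$ is immediate because the two cross-tier events are complementary (ties occur with probability zero). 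The main obstacle is justifying the independent-thinning/void-probability step cleanly; the rest is bookkeeping.
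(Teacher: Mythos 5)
Your proposal is correct and is exactly the standard argument behind this result: the paper itself gives no proof (it defers to the cited Lemma~2 of Bai and Heath), and that reference proceeds precisely as you do --- independent thinning into LOS/NLOS inhomogeneous PPPs, reduction of the association rule to the distance condition $X_\nlos>\rho_\los(x)$ via monotonicity of the path loss, the void probability of $\Phi_\nlos$ on the disk of radius $\rho_\los(x)$, and deconditioning with $B_\los\tau_\los(x)$. Your explicit handling of the independence between the two thinned processes and of the ``no NLOS BS exists'' convention covers the only points where the argument could go wrong.
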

\begin{lemma}[{Modified from \cite[Lemma 3]{bai2015}}]\normalfont
Given that the energy harvesting user is connected to a LOS mmWave BS, the PDF of the link distance is given by the expression
$\tilde{\tau}_\los\left(x\right)=\frac{B_\los {\tau}_\los\left(x\right)}{\varrho_\los}e^{-2\pi\lambda\int_{0}^{\rho_\los(x)}(1-p(v))v\text{d}v},$
where $x>0$. Given that the user is connected to a NLOS mmWave BS, the PDF of the link distance is given by
$\tilde{\tau}_\nlos\left(x\right)=\frac{B_\nlos {\tau}_\nlos\left(x\right)}{\varrho_\nlos}e^{-2\pi\lambda\int_{0}^{\rho_\nlos(x)}p(v)v\text{d}v}$
for $x>0$ and $\rho_\nlos(x)=\left(\frac{C_{\mathrm{L}}}{C_{\mathrm{N}}}\right)^{\frac{1}{\alpha_\los}}x^{\frac{\alpha_\nlos}{\alpha_\los}}$.
\end{lemma}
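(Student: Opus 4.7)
The plan is to compute the conditional PDF via Bayes' rule after splitting the BS process into two independent components using the independent blockage model. First I would decompose $\Phi$ into a LOS thinning $\Phi_\los$ of intensity $\lambda p(r)$ and a NLOS thinning $\Phi_\nlos$ of intensity $\lambda(1-p(r))$. Since the blockage indicator on each link is assumed independent of all other links, the two thinned processes are mutually independent non-homogeneous PPPs, and Lemmas 1 and 2 already quantify the relevant nearest-neighbor distance distributions and association probabilities with respect to this decomposition.

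Next I would translate the event ``the typical user is connected to a LOS BS at distance $x$'' into a joint condition on $\Phi_\los$ and $\Phi_\nlos$. Because connection is determined by maximum average received power and NLOS received power is monotonically decreasing in link length, this event is equivalent to: (i) the nearest BS in $\Phi_\los$ lies at distance $x$, and (ii) every BS in $\Phi_\nlos$ lies beyond distance $\rho_\los(x)=(C_\nlos/C_\los)^{1/\alpha_\nlos}x^{\alpha_\los/\alpha_\nlos}$, which is exactly the inversion of $C_\los x^{-\alpha_\los}\geq C_\nlos y^{-\alpha_\nlos}$. Event (ii) is the void event of $\Phi_\nlos$ on the ball $B(0,\rho_\los(x))$.

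Then I would assemble the density. By Lemma 1, the PDF of the nearest LOS distance conditional on $\{\Phi_\los\neq\emptyset\}$ is $\tau_\los(x)$, so the unconditional defective density is $B_\los\,\tau_\los(x)$. Using independence of $\Phi_\los$ and $\Phi_\nlos$, I can multiply this by the void probability of $\Phi_\nlos$ on $B(0,\rho_\los(x))$, namely $\exp\bigl(-2\pi\lambda\int_0^{\rho_\los(x)}(1-p(v))v\,\mathrm{d}v\bigr)$, to obtain the joint density of the event ``connected to a LOS BS at distance $x$''. Dividing by $\varrho_\los$ (from Lemma 2) renormalizes to the conditional PDF and gives the claimed $\tilde{\tau}_\los(x)$. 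The NLOS half of the statement is symmetric: swap the roles of the two processes, invert the power comparison to get $\rho_\nlos(x)$, and take the void probability of $\Phi_\los$ which carries intensity $\lambda p(\cdot)$, producing the factor $\exp\bigl(-2\pi\lambda\int_0^{\rho_\nlos(x)}p(v)v\,\mathrm{d}v\bigr)$.

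The only delicate step is the bookkeeping of the conditioning events: $\tau_\los(x)$ from Lemma 1 is already conditioned on the existence of at least one LOS BS, so one must reinstate the factor $B_\los$ before combining with the independent NLOS void factor, and then divide by $\varrho_\los$ at the end; failing to track these normalizations would produce the wrong constants. Independence of the LOS and NLOS thinnings---which is what licenses the multiplicative combination of the two factors---is the key structural fact, and everything else reduces to routine Poisson void-probability calculations and the monotone inversion of the path-loss expression.
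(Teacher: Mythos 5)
Your derivation is correct. The paper itself gives no proof of this lemma --- it is imported directly from the cited reference --- but your argument (independent thinning of $\Phi$ into LOS/NLOS non-homogeneous PPPs, translating ``associated with a LOS BS at distance $x$'' into the nearest-point event for $\Phi_\los$ intersected with the void event for $\Phi_\nlos$ on $B(0,\rho_\los(x))$, then multiplying the defective density $B_\los\tau_\los(x)$ by the void probability and renormalizing by $\varrho_\los$) is exactly the standard derivation behind it, and your bookkeeping of the $B_\los$ and $\varrho_\los$ factors is right: integrating your unnormalized density over $x$ reproduces the expression for $\varrho_\los$ in Lemma~2, which confirms the normalization.
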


Leveraging Slivnyak's theorem\cite{haenggi2012stochastic}, we conduct the analysis at a typical energy harvesting user located at the origin without losing generality. We let $P_\mathrm{t}$ be the BS transmit power, and $Y=\sum_{\ell\in\Phi(\lambda)}^{}P_\textrm{t}\delta_\ell H_\ell g_\ell(r_\ell)$ be the power received at the user. Recall that $\psi_\textrm{min}$ denotes the harvester activation threshold defined in Section \ref{secNet}. The energy harvested at a typical receiver (in unit time) can be expressed as
\begin{align}\label{eq1}
\gamma=\xi Y\mathbbm{1}_{\{Y>\psi_{\textrm{min}}\}} 
\end{align}
where $\xi\in(0,1]$ is the rectifier efficiency. Note that we have neglected the noise term since it is extremely small relative to the aggregate received signal. 
The remaining parameters follow from Section \ref{secSys}. Recall that given a BS $\ell\in\Phi(\lambda)$, the corresponding fading parameters will be distinct depending on whether the link is LOS or NLOS, which in turn depends on the LOS probability function (Section \ref{secNet}).
Further note that for the connected case, it follows from Section \ref{secAnt} that $\delta_0=M_\textrm{t}M_\textrm{r}$ for the link from the serving BS (denoted by subscript $0$).
\subsubsection*{\textbf{Connected case}}
The following theorem provides an analytical expression for the energy coverage probability $\textrm{P}_\textrm{con}\left(\lambda,\psi\right)=\Pr\{\gamma>\psi\}$ at a connected user, where the random variable $\gamma$ is given in (\ref{eq1}), and $\psi$ is the energy outage threshold. Note that $\textrm{P}_\textrm{con}\left(\lambda,\psi\right)$ can also be interpreted as the complementary cumulative distribution function (CCDF) of the harvested energy.
\begin{theorem}\normalfont
\label{thm1}
In a mmWave network with density $\lambda$, the energy coverage probability $\textrm{P}_\textrm{con}\left(\lambda,\psi\right)$ for the connected case given an energy outage threshold $\psi$, can be evaluated as
\begin{align}
\textrm{P}_\textrm{con}\left(\lambda,\psi\right)&= 
\textrm{P}_\textrm{con,L}\left(\lambda,\hat{\psi}\right)\varrho_\textrm{L}+\textrm{P}_\textrm{con,N}\left(\lambda,\hat{\psi}\right)\varrho_\textrm{N},
\end{align}
where $\hat{\psi}=\max\left(\frac{\psi}{\xi},\psi_\textrm{min}\right)$, $\varrho_\textrm{L}=1-\varrho_\textrm{N}$ is given in Lemma 2, while $\textrm{P}_\textrm{con,L}\left(\cdot\right)$ and $\textrm{P}_\textrm{con,N}\left(\cdot\right)$ are the conditional energy coverage probabilities given the serving BS is LOS or NLOS. These terms can be tightly approximated as
\begin{align}\label{los}
\textrm{P}_\textrm{con,L}\left(\lambda,\psi\right)&\approx 
\sum\limits_{k=0}^{N}\left(-1\right)^{k}\binom{N}{k}~~\times  \nonumber\\
\int\limits_{r_g}^{\infty}\zeta_k^\los\left(r\right)& e^{-{\overset{}{\Upsilon}}_{k,1}^{}\left(\lambda,\psi,r\right) -{\overset{}{\Upsilon}_{k,2}\left(\lambda,\psi,\rho_\los(r)\right)}}\tilde{\tau}_\los\left(r\right)\text{d}r,
\end{align}
where $\zeta_k^\los\left(x\right)=\left(1+\frac{akP_\textrm{t} M_{\mathrm{t}}M_{\mathrm{r}}C_\los}{ \psi{N_\los}x^{\alpha_\los}}\right)^{-N_\los}
$, the approximation constant $a=N(N!)^{-\frac{1}{N}}$ where $N$ denotes the number of terms in the approximation, while $r_g$ defines the minimum link distance and is included to avoid unbounded path loss at the receiver. Similarly,
\begin{align}\label{nlos}
\textrm{P}_\textrm{con,N}\left(\lambda,\psi\right)&\approx 
\sum\limits_{k=0}^{N}\left(-1\right)^{k}\binom{N}{k}~~\times \nonumber\\
\int\limits_{r_g}^{\infty}\zeta_k^\nlos\left(r\right) & e^{-{\overset{}{\Upsilon}}_{k,1}\left(\lambda,\psi,\rho_\nlos(r)\right) -{\overset{}{\Upsilon}}_{k,2}\left(\lambda,\psi,r\right)}\tilde{\tau}_\nlos\left(r\right)\text{d}r,
\end{align}
where $\zeta_k^\nlos\left(x\right)=\left(1+\frac{akP_\textrm{t} M_{\mathrm{t}} M_{\mathrm{r}} C_\nlos}{ \psi{N_\nlos}x^{\alpha_\nlos}}\right)^{-N_\nlos}
$,
\begin{small}\begin{align}\label{fac1}
{\overset{}{\Upsilon}}_{k,1}\left(\lambda,\psi,x\right)=2\pi\lambda\sum_{i=1}^{4}p_i \int\limits_{x}^{\infty}&\left(1-{\left[1+\frac{aP_\textrm{t}kD_iC_\los} {\psi{N_\los}t^{\alpha_\los}}\right]^{-N_\los}}\right)\nonumber\\
&\hspace{0.4in}\times p(t)t\text{d}t,
\end{align}
\end{small}
\begin{small}
\begin{align}\label{fac2}
{\overset{}{\Upsilon}}_{k,2}\left(\lambda,\psi,x\right)=
2\pi\lambda\sum_{i=1}^{4}p_i \int\limits_{x}^{\infty}&\left(1-{\left[1+\frac{aP_\textrm{t}kD_iC_\nlos} {\psi{N_\nlos}t^{\alpha_\nlos}}\right]^{-N_\nlos}}\right)\nonumber\\
&\hspace{0.4in}\times
\left(1-p(t)\right)t\text{d}t,
\end{align}
\end{small}and the distance distributions $\tilde{\tau}_\los(\cdot)$ and $\tilde{\tau}_\nlos(\cdot)$ follow from Lemma 3.
\end{theorem}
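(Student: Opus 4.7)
The plan follows a four-step reduction: peel off the thresholding, decompose the received power $Y$ into a serving-BS piece and an interference piece, apply an Alzer-type Gamma-CDF approximation to convert the tail probability into a finite sum of Laplace transforms, and evaluate those Laplace transforms via the PGFL of the blockage-thinned PPPs. First I would note that $\gamma=\xi Y\,\mathbbm{1}\{Y>\psi_{\mathrm{min}}\}$ makes $\{\gamma>\psi\}=\{Y>\psi/\xi\}\cap\{Y>\psi_{\mathrm{min}}\}=\{Y>\hat\psi\}$, explaining the appearance of $\hat\psi=\max(\psi/\xi,\psi_{\mathrm{min}})$. The tagged BS, chosen to maximise the average beam-aligned received power $P_\mathrm{t}M_\mathrm{t}M_\mathrm{r}g_\ell(r_\ell)$ (equivalently the path-loss gain, since the beam factor is identical for every candidate), is LOS or NLOS with mutually exclusive probabilities $\varrho_\los,\varrho_\nlos$ from Lemma 2, yielding the outer decomposition $\mathrm{P}_\mathrm{con}(\lambda,\psi)=\mathrm{P}_{\mathrm{con,L}}(\lambda,\hat\psi)\varrho_\los+\mathrm{P}_{\mathrm{con,N}}(\lambda,\hat\psi)\varrho_\nlos$.

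Next, consider the LOS-tagged branch (the NLOS branch is symmetric and yields (\ref{nlos})). Condition on the tagged-link distance $r$, whose density is $\tilde\tau_\los(r)$ from Lemma 3, and decompose $Y=S+I_\los+I_\nlos$, where $S=P_\mathrm{t}M_\mathrm{t}M_\mathrm{r}C_\los H_0 r^{-\alpha_\los}$ with $H_0\sim\Gamma(N_\los,1/N_\los)$ is the beam-aligned serving contribution, $I_\los$ aggregates the remaining LOS BSs (which by the tagging rule lie at distance $>r$), and $I_\nlos$ aggregates the NLOS BSs (which must lie at distance $>\rho_\los(r)$ so that no NLOS BS offers larger average power than the tagged LOS BS).

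Now apply the Alzer approximation $\mathbbm{1}\{y>1\}\approx(1-e^{-ay})^N$ with $a=N(N!)^{-1/N}$, taking $y=Y/\hat\psi$ and expanding by the binomial theorem:
\begin{equation*}
\Pr(Y>\hat\psi\mid r)\approx\sum_{k=0}^{N}(-1)^{k}\binom{N}{k}\,\mathbb{E}\!\left[e^{-(ak/\hat\psi)Y}\,\Big|\,r\right].
\end{equation*}
Conditional independence of $S$, $I_\los$, $I_\nlos$ factorises each Laplace transform. The serving factor $\mathbb{E}[e^{-(ak/\hat\psi)S}]$ reduces, via the $\Gamma(N_\los,1/N_\los)$ MGF, to exactly $\zeta_k^\los(r)$. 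Each interference factor follows from the PGFL of the corresponding independent inhomogeneous sub-PPP (intensities $\lambda p$ and $\lambda(1-p)$, by the thinning behind Lemma 1), after marginalising over the i.i.d.\ directivity marks $\delta_\ell\in\{D_1,\ldots,D_4,0\}$ with probabilities $\{p_1,\ldots,p_4,q_o\}$ and over Nakagami fading via the same MGF; the $D_5=0$ branch contributes a trivial factor of $1$ and drops out, leaving the four-term sum. The two resulting factors are $e^{-\Upsilon_{k,1}(\lambda,\hat\psi,r)}$ and $e^{-\Upsilon_{k,2}(\lambda,\hat\psi,\rho_\los(r))}$, the lower integration limits encoding the geometric tagging constraint.

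Finally, integrating $\zeta_k^\los(r)\,e^{-\Upsilon_{k,1}-\Upsilon_{k,2}}$ against $\tilde\tau_\los(r)$ on $[r_g,\infty)$, where $r_g>0$ truncates the path-loss singularity at the origin, reproduces (\ref{los}); the NLOS-tagged computation is identical with $\los\leftrightarrow\nlos$ swapped and produces (\ref{nlos}). The main technical obstacle is the Alzer step, which is the sole source of the ``$\approx$'' in the statement --- every other manipulation (thinning, PGFL, mark averaging, Gamma MGF) is exact. A secondary care-point is bookkeeping the correct lower integration limit for each sub-PPP, since the no-closer-better-BS rule acts asymmetrically on LOS versus NLOS interferers depending on the class of the tagged BS.
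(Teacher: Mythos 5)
Your proposal is correct and follows essentially the same route as the paper's Appendix A: reduce $\{\gamma>\psi\}$ to $\{Y>\hat\psi\}$, condition on the LOS/NLOS serving class and link distance, apply the Alzer-based Gamma-tail approximation (the paper phrases it via a dummy $\Gamma(N,1/N)$ variable, which is the same device) with binomial expansion, and evaluate the resulting Laplace transforms via the Gamma MGF for the serving link and the PGFL of the directivity-thinned LOS/NLOS sub-PPPs with the correct exclusion radii $r$ and $\rho_\los(r)$. No gaps.
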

\begin{proof}
See Appendix A.
\end{proof}
Recall that $p(t)=e^{-\beta t}$ is the LOS probability function defined in Section \ref{secNet}, and captures the effect of building blockages. In (\ref{los}), the term $\zeta_k^\los\left(\cdot\right)$ models the contribution from the LOS serving link, ${\overset{}{\Upsilon}}_{k,1}\left(\cdot\right)$ accounts for the other LOS links, and ${\overset{}{\Upsilon}}_{k,2}\left(\cdot\right)$ captures the effect of the NLOS links. Note that the $i$th term in (\ref{fac1}), (\ref{fac2}) corresponds to the contributions from the BS-user links having directivity gain $D_i$. Similarly, $\zeta_k^\nlos\left(\cdot\right)$ in (\ref{nlos}) models the case where the serving BS is NLOS. Note that these terms further depend on the channel propagation conditions ($\alpha_\los$, $\alpha_\nlos$, $N_\los$, $N_\nlos$, $C_\los$, $C_\nlos$), the network density $\lambda$ as well as the antenna geometry parameters (via $D_i$, $p_i$), which are summarized in Table \ref{table1}.
Furthermore, the outage threshold $\psi$ will depend on the power requirements at a particular user, and would typically be greater than the sensitivity of the harvesting circuit (i.e., $\psi\geq\psi_\textrm{min}$ such that $\hat{\psi}=\frac{\psi}{\xi}$).
Though the expressions in Theorem 1 can be evaluated using numerical tools, this could be tedious due to the presence of multiple integrals.
To address this, we simplify the analysis by approximating the LOS probability function with a step function, and by further ignoring the small-scale fading. This result in a much simpler expression for the energy coverage probability.   
\begin{proposition}\normalfont
Let $R_B=\left(\frac{-\ln(1-\varrho_\mathrm{L})}{\lambda\pi}\right)^{0.5}$, $\tilde{a}=\lambda\pi R_B^2 e^{-\lambda\pi R_B^2}$, and $W_{ik}=\frac{ak D_i P_\textrm{t}C_\mathrm{L}}{\hat{\psi}}$ for $i\in\{1,2,3,4\}$, $k\in\{0,1,\cdots,N\}$. The energy coverage probability  can be further approximated as 
\begin{align}\label{eq:prop1}
& P_\textrm{con}\left(\lambda,\psi\right)\approx 
\tilde{a}\sum\limits_{k=0}^{N}\left(-1\right)^{k}\binom{N}{k}
\int_{\left(\frac{r_g}{R_B}\right)^2}^{1}
\zeta_k^\los\left({t}^{\frac{1}{2}}R_B\right)
\times\nonumber\\
&
 \prod\limits_{i=1}^{4}e^{-\frac{2\pi\lambda}{\alpha_\textrm{L}}
p_iW_{ik}^{\frac{2}{\alpha_\mathrm{L}}}
\Gamma\left(-\frac{2}{\alpha_\mathrm{L}}; W_{ik}\left(t^{\frac{1}{2}}R_B\right)^{-\alpha_{\mathrm{L}}},W_{ik}R_B^{-\alpha_{\mathrm{L}}} \right) }\text{d}t,
\end{align}
\end{proposition}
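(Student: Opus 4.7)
The plan is to reduce the Theorem~1 expression under two simplifying approximations: a LOS-ball blockage model, $p(r)\approx\mathbbm{1}_{\{r\le R_B\}}$, and no small-scale fading on the interfering links, i.e.\ sending $(1+s/N_\los)^{-N_\los}\to e^{-s}$ inside $\Upsilon_{k,1}$. First I would fix $R_B$ by matching the step-function probability of seeing at least one LOS BS, $1-e^{-\lambda\pi R_B^2}$, to the given $\varrho_\los$, which solves to $R_B=\sqrt{-\ln(1-\varrho_\los)/(\lambda\pi)}$. Under the LOS-ball every BS within $R_B$ is LOS and every BS outside is NLOS, so the NLOS-related terms---both the connection probability $\varrho_\nlos$ and the NLOS interference integral $\Upsilon_{k,2}$---are effectively zero, and Theorem~1 collapses to $\textrm{P}_\textrm{con}(\lambda,\psi)\approx\varrho_\los\,\textrm{P}_\textrm{con,L}(\lambda,\hat\psi)$.

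Next I would evaluate $\Upsilon_{k,1}(\lambda,\hat\psi,r)$ in closed form. With the step-function and no-interference-fading approximations, and using $\sum_{i=1}^{4}p_i=1$ (which holds for the standard full-coverage sectored antenna), it reduces to
\begin{align*}
\Upsilon_{k,1}=2\pi\lambda\sum_{i=1}^{4}p_i\int_r^{R_B}\!\bigl(1-e^{-W_{ik}t^{-\alpha_\los}}\bigr)\,t\,\mathrm{d}t.
\end{align*}
Integrating the ``$1$'' piece directly and substituting $s=W_{ik}t^{-\alpha_\los}$ in the exponential piece yields
\begin{align*}
\Upsilon_{k,1}=\pi\lambda(R_B^2-r^2)+\frac{2\pi\lambda}{\alpha_\los}\sum_{i=1}^{4}p_i W_{ik}^{2/\alpha_\los}\,\Gamma\!\left(-\tfrac{2}{\alpha_\los};\,W_{ik}r^{-\alpha_\los},\,W_{ik}R_B^{-\alpha_\los}\right),
\end{align*}
which is exactly the exponent appearing inside the product in (\ref{eq:prop1}).

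The final step is a change of variable $t=(r/R_B)^2$ in the integral over the serving-link distance. Under the LOS-ball, Lemma~3 collapses to $\tilde\tau_\los(r)=2\pi\lambda r e^{-\lambda\pi r^2}/\varrho_\los$ on $[r_g,R_B]$, so the range $[r_g,R_B]$ maps to $[(r_g/R_B)^2,1]$ and $\varrho_\los\,\tilde\tau_\los(r)\,\mathrm{d}r=\pi\lambda R_B^2 e^{-\lambda\pi R_B^2 t}\,\mathrm{d}t$. Multiplying this by the $r$-dependent factor $e^{-\pi\lambda(R_B^2-r^2)}=e^{-\lambda\pi R_B^2(1-t)}$ extracted from $e^{-\Upsilon_{k,1}}$ collapses the two $t$-dependent exponentials into the constant $\pi\lambda R_B^2 e^{-\lambda\pi R_B^2}=\tilde{a}$, which then pulls outside the integral. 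The surviving factors---the product of incomplete-gamma exponentials and $\zeta_k^\los(R_B\sqrt{t})$---match (\ref{eq:prop1}) term for term.

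The main technical obstacle is this prefactor cancellation: $\tilde{a}$ emerges outside the integral only because the $+\pi\lambda r^2$ in the exponent of the LOS-ball nearest-neighbor density exactly cancels the $-\pi\lambda r^2$ contributed by the Poisson void probability of the LOS interferers. Beyond that, the remaining work is careful bookkeeping---handling the sign convention of the generalized incomplete gamma $\Gamma(a;x_1,x_2)$ (since $r<R_B$ swaps the ordering of its arguments) and translating the lower cutoff $r_g$ into the new variable.
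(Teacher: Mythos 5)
Your proposal follows essentially the same route as the paper's Appendix B: the LOS-ball approximation with $R_B$ fixed by matching $\varrho_\los=1-e^{-\lambda\pi R_B^2}$, dropping small-scale fading on the interfering links and all NLOS contributions, evaluating the resulting void-probability integral in closed form via the generalized incomplete Gamma function (using $\sum_{i=1}^4 p_i=1$, i.e.\ $p_5=0$), and the substitution $t=(r/R_B)^2$ under which the Poisson void exponent cancels against the nearest-LOS-BS density to leave the constant $\tilde a$ outside the integral. The derivation is correct and the cancellation you flag as the key step is exactly the mechanism in the paper's proof.
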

where $\Gamma\left(h;u,v\right)=\int_u^v x^{h-1}e^{-x}\text{d}x$ is the generalized incomplete Gamma function.
\begin{proof}
See Appendix B.
\end{proof}
We note that Proposition 1 is relatively efficient to compute as it involves integration over a finite interval only, and because Gamma function can be readily evaluated using most numerical tools.
We also observe that the coverage is mainly influenced by the LOS BSs. For example, a key term in (\ref{eq:prop1}) is $\lambda\pi R_B^2$ which represents the average number of LOS BSs seen by the user. We now provide analytical expressions for the average harvested power at a connected user.
\begin{proposition}\normalfont
The average harvested power for the connected case ${\bar{P}}_{\textrm{con}}\left(\lambda,\psi\right)$ for an energy outage threshold $\psi\in\left[\psi_\textrm{min},\infty\right)$ is given by
$
{\bar{P}}_{\textrm{con}}\left(\lambda,\psi\right)=
\int_{\psi}^{\infty}P_\textrm{con}\left(\lambda,x\right)\text{d}x+\psi P_\textrm{con}\left(\lambda,\psi\right).
$
\end{proposition}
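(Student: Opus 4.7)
The plan is to interpret $\bar{P}_{\textrm{con}}(\lambda,\psi)$ as a truncated expectation of the harvested energy and then obtain the claimed identity by a single integration by parts against the CCDF. Recall from \eqref{eq1} that the harvested energy is $\gamma=\xi Y\mathbbm{1}_{\{Y>\psi_{\textrm{min}}\}}$, and by definition $P_{\textrm{con}}(\lambda,x)=\Pr\{\gamma>x\}$ is the CCDF of $\gamma$ evaluated at the level $x$. Since the harvester only contributes usable power when the coverage event $\{\gamma>\psi\}$ occurs (otherwise the received energy is discarded as outage), the natural definition of the average harvested power at threshold $\psi\ge\psi_{\textrm{min}}$ is
\begin{equation*}
\bar{P}_{\textrm{con}}(\lambda,\psi)\;=\;\mathbb{E}\bigl[\gamma\,\mathbbm{1}_{\{\gamma>\psi\}}\bigr].
\end{equation*}

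Next, I would write this truncated expectation in Stieltjes form, $\mathbb{E}[\gamma\,\mathbbm{1}_{\{\gamma>\psi\}}]=\int_{\psi}^{\infty} x\,\mathrm{d}F_\gamma(x)$, where $F_\gamma$ is the CDF of $\gamma$, and apply integration by parts with $u=x$ and $\mathrm{d}v=\mathrm{d}F_\gamma(x)=-\mathrm{d}P_{\textrm{con}}(\lambda,x)$. This yields
\begin{equation*}
\int_{\psi}^{\infty} x\,\mathrm{d}F_\gamma(x)\;=\;\bigl[-x\,P_{\textrm{con}}(\lambda,x)\bigr]_{x=\psi}^{x=\infty}+\int_{\psi}^{\infty} P_{\textrm{con}}(\lambda,x)\,\mathrm{d}x.
\end{equation*}
The lower boundary contributes $\psi P_{\textrm{con}}(\lambda,\psi)$, and, provided the upper boundary term vanishes, we recover exactly the claimed expression.

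The only real obstacle is justifying that $x\,P_{\textrm{con}}(\lambda,x)\to 0$ as $x\to\infty$, i.e., that $\gamma$ has a sufficiently light tail. This reduces to checking that the aggregate received power $Y=\sum_{\ell}P_{\textrm{t}}\delta_\ell H_\ell g_\ell(r_\ell)$ has a finite first moment. Using Campbell's theorem together with the independence of the directivity gains $\delta_\ell$, the Gamma fading $H_\ell$, and the path loss $g_\ell(r_\ell)$, the mean of $Y$ is a linear functional of the intensity measure of $\Phi(\lambda)$; after incorporating the LOS/NLOS thinning from Section~\ref{secNet} and the minimum link distance $r_g$ introduced in Theorem~1 (which removes any singularity at the origin), this functional is finite because $\alpha_{\textrm{L}},\alpha_{\textrm{N}}>2$ makes the radial integral convergent at infinity. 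Hence $\mathbb{E}[\gamma]\le\xi\,\mathbb{E}[Y]<\infty$, and Markov's inequality gives $x\,P_{\textrm{con}}(\lambda,x)\le\mathbb{E}[\gamma\,\mathbbm{1}_{\{\gamma>x\}}]\to 0$ by dominated convergence.

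Combining the integration by parts with the vanishing boundary term gives
\begin{equation*}
\bar{P}_{\textrm{con}}(\lambda,\psi)\;=\;\psi\,P_{\textrm{con}}(\lambda,\psi)+\int_{\psi}^{\infty}P_{\textrm{con}}(\lambda,x)\,\mathrm{d}x,
\end{equation*}
as claimed. The argument is short and does not rely on the explicit LOS/NLOS decomposition of $P_{\textrm{con}}$ from Theorem~1; all that matters is that $P_{\textrm{con}}(\lambda,\cdot)$ is a proper CCDF of a random variable with finite mean. An entirely analogous derivation would apply to the nonconnected case by replacing $P_{\textrm{con}}$ with $P_{\textrm{ncon}}$.
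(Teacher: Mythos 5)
Your proof is correct and follows essentially the same route as the paper: the authors also simply treat $P_{\textrm{con}}(\lambda,\cdot)$ as the CCDF of the nonnegative random variable $\gamma$ and invoke the standard tail identity $\mathbb{E}[\gamma\,\mathbbm{1}_{\{\gamma>\psi\}}]=\psi\,P_{\textrm{con}}(\lambda,\psi)+\int_{\psi}^{\infty}P_{\textrm{con}}(\lambda,x)\,\text{d}x$, which is exactly what your integration by parts produces. Your explicit verification that the boundary term vanishes (via $\mathbb{E}[Y]<\infty$ from Campbell's theorem with $r_g>0$ and $\alpha_{\mathrm{L}},\alpha_{\mathrm{N}}>2$) is a detail the paper leaves implicit, but it is a welcome addition rather than a departure.
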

\begin{proof}
The proof follows by noting that $\gamma$ has nonnegative support, and by treating  ${{P}}_{\textrm{con}}\left(\cdot,\cdot\right)$ as the CCDF of $\gamma$.
\end{proof}
Here, ${{P}}_{\textrm{con}}\left(\cdot,\cdot\right)$ follows from Theorem 1 or Proposition 1.
${\bar{P}}_{\textrm{con}}\left(\lambda,\psi\right)$ can be interpreted as the \emph{useful} average harvested power. This is because only those incident signals that meet the activation threshold can be harvested. 
To get further insights, we now analyze the limiting case $\psi\rightarrow 0$ of Proposition 2. This provides an upper bound on the average harvested power. 
\begin{corollary}\normalfont
The average harvested power for the limiting case $\lim\limits_{\psi\rightarrow 0}{\bar{P}}_{\textrm{con}}\left(\lambda,\psi\right)=
{\bar{P}}_{\textrm{con}}\left(\lambda,0\right)
=\xi\left(\varrho_\los\bar{P}_\los+\varrho_\nlos\bar{P}_\nlos\right)$
, where
\begin{small}
\begin{align}\label{alos}
\bar{P}_\los&=\int\limits_{r_g}^{\infty}\left(P_\textrm{t} M_{\mathrm{t}} M_{\mathrm{r}} C_\los r^{-\alpha_\los}+{\Psi}_\los\left(r\right)+{\Psi}_\nlos\left(\rho_\los(r)\right)\right) 
{\tilde{\tau}}_\los\left(r\right)\text{d}r,
\end{align}
\end{small}
\begin{small}
\begin{align}\label{anlos}
\bar{P}_\nlos&=\int\limits_{r_g}^{\infty}\left(P_\textrm{t} M_{\mathrm{t}} M_{\mathrm{r}} C_\nlos r^{-\alpha_\nlos}+{\Psi}_\los\left(\rho_\nlos(r)\right)+{\Psi}_\nlos\left(r\right)\right)
{\tilde{\tau}}_\nlos\left(r\right)\text{d}r,
\end{align}
\end{small}
\begin{small}
\begin{align}\label{fac3}
{\Psi}_\los\left(x\right)=\kappa C_\los\sum_{i=1}^{4}D_i p_i \int_{x}^{\infty}t^{-(\alpha_\los-1)}p(t)\text{d}t,
\end{align}
\end{small}
\begin{small}
\begin{align}\label{fac4}
{\Psi}_\nlos\left(x\right)=\kappa C_\nlos\sum_{i=1}^{4}D_i p_i \left(\frac{x^{-(\alpha_\nlos-2)}}{\alpha_\nlos-2}-\int_{x}^{\infty}t^{-(\alpha_\nlos-1)}p(t)\text{d}t\right),
\end{align}
\end{small}
and $\kappa= 2\pi\lambda P_\textrm{t}$.
\end{corollary}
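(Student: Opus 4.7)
The plan is to interpret $\lim_{\psi\to 0}\bar{P}_{\textrm{con}}(\lambda,\psi)$ as an ordinary expectation and then evaluate it directly via Campbell's theorem, conditioning on the type of the serving link. Starting from Proposition 2, the second summand $\psi P_{\textrm{con}}(\lambda,\psi)$ vanishes as $\psi\to 0$ since $P_{\textrm{con}}(\lambda,0)\leq 1$, while monotone convergence gives $\int_{\psi}^{\infty}P_{\textrm{con}}(\lambda,x)\,\text{d}x\to\int_{0}^{\infty}P_{\textrm{con}}(\lambda,x)\,\text{d}x$. Since $P_{\textrm{con}}(\lambda,\cdot)$ is the CCDF of the nonnegative random variable $\gamma$, this yields $\bar{P}_{\textrm{con}}(\lambda,0)=\mathbb{E}[\gamma]$, which upper-bounds the harvested power by $\xi\,\mathbb{E}[Y]$ (equivalently, one may suppress the activation indicator in this limit). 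Hence it suffices to compute $\xi\,\mathbb{E}[Y]$ in closed form.

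Next, I would condition on whether the tagged BS is LOS or NLOS, writing $\mathbb{E}[Y]=\varrho_\los\,\mathbb{E}[Y\,|\,\textrm{LOS tag}]+\varrho_\nlos\,\mathbb{E}[Y\,|\,\textrm{NLOS tag}]$, with $\varrho_\los,\varrho_\nlos$ from Lemma 2. In either case, $Y$ splits into three contributions: the serving link, the remaining LOS BSs, and the NLOS BSs. For the serving link in the LOS case, the assumed beam alignment forces $\delta_0=M_\textrm{t}M_\textrm{r}$, and combined with $\mathbb{E}[H_0]=1$ for the normalized Gamma fading, this gives a per-distance contribution of $P_\textrm{t}M_\textrm{t}M_\textrm{r}C_\los r^{-\alpha_\los}$, which after averaging against $\tilde{\tau}_\los(r)$ from Lemma 3 produces the first term of $\bar{P}_\los$; the NLOS case is symmetric.

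For the remaining two contributions, I would apply Campbell's theorem to the independent thinned LOS and NLOS PPPs (densities $\lambda p(t)$ and $\lambda(1-p(t))$ in polar form). The key fact is that the directivity gain $\delta_\ell$ of a non-serving BS is independent of its location with $\mathbb{E}[\delta_\ell]=\sum_{i=1}^{4}D_i p_i$ (the $D_5=0$ event drops out). The integration region is determined by the maximum-average-received-power tagging rule: if the LOS serving BS is at distance $r$, then no other LOS BS is closer than $r$, and no NLOS BS is closer than $\rho_\los(r)=(C_\nlos/C_\los)^{1/\alpha_\nlos}r^{\alpha_\los/\alpha_\nlos}$, the NLOS distance at which the mean received power equals that of the serving LOS link. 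This produces
\begin{align*}
\kappa C_\los \sum_{i=1}^{4}D_i p_i\!\int_{r}^{\infty}\!t^{1-\alpha_\los}p(t)\,\text{d}t
\;+\;
\kappa C_\nlos\sum_{i=1}^{4}D_i p_i\!\int_{\rho_\los(r)}^{\infty}\!t^{1-\alpha_\nlos}(1-p(t))\,\text{d}t,
\end{align*}
which are exactly $\Psi_\los(r)$ and $\Psi_\nlos(\rho_\los(r))$ after splitting the $(1-p(t))$ integral as $\int_x^\infty t^{1-\alpha_\nlos}\text{d}t-\int_x^\infty t^{1-\alpha_\nlos}p(t)\,\text{d}t=\frac{x^{-(\alpha_\nlos-2)}}{\alpha_\nlos-2}-\int_x^\infty t^{-(\alpha_\nlos-1)}p(t)\,\text{d}t$ (assuming $\alpha_\nlos>2$). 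Averaging over $r$ with $\tilde{\tau}_\los$ yields $\bar{P}_\los$, and the NLOS-tagged case is handled identically using $\rho_\nlos(r)$ and $\tilde{\tau}_\nlos$.

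The main obstacle is not the mechanics of Campbell's theorem, which are standard, but rather the correct bookkeeping of the tagging-induced integration limits and the identification of the expected directivity gain $\sum_{i=1}^{4}D_ip_i$ as location-independent; in particular, justifying the exclusion region $[0,\rho_\los(r))$ for NLOS interferers requires invoking the maximum-mean-power association rule from Lemma 2/3. A minor technical point is the interchange of limit and integral when passing $\psi\to 0$ inside $\int_\psi^\infty P_{\textrm{con}}(\lambda,x)\,\text{d}x$, which is handled by monotone convergence since $P_{\textrm{con}}(\lambda,\cdot)$ is nonnegative.
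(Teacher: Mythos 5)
Your proposal is correct and follows essentially the same route as the paper's Appendix C: compute the conditional means $\mathbb{E}[S+I\,|\,\los]$ and $\mathbb{E}[S+I\,|\,\nlos]$ by conditioning on the serving-link distance, apply Campbell's theorem to the thinned LOS/NLOS PPPs with expected directivity gain $\sum_{i=1}^4 D_i p_i$ and the association-induced exclusion radii $r$ and $\rho_\los(r)$, then average against $\tilde{\tau}_\los$, $\tilde{\tau}_\nlos$. Your added care in passing from Proposition 2 to $\mathbb{E}[\gamma]$ via monotone convergence, and in noting that dropping the activation indicator makes the result an upper bound, is consistent with the paper's own remark that the $\psi\rightarrow 0$ limit provides an upper bound on the average harvested power.
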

\begin{proof}
See Appendix C.
\end{proof}
$\bar{P}_\los$ and $\bar{P}_\nlos$ denote the average harvested power given the user is tagged to an LOS or an NLOS BS. Note that the average harvested power is independent of the small-scale fading parameters. To reveal further insights, we provide the following approximation for the average harvested power (which is validated in Section \ref{secResults1}).
\begin{corollary} The average harvested power for the limiting case, ${\bar{P}}_{\textrm{con}}\left(\lambda,0\right)$, can be further approximated as 
\begin{small}
\begin{align}\label{approx2}
&{\bar{P}}_{\textrm{con}}\left(\lambda,0\right)\overset{(a)}{\approx}
\kappa M_\textrm{t} M_\textrm{r} C_L \int\limits_{r_g}^{R_B}\frac{e^{-\lambda\pi t{^2}}} {t^{\alpha_\mathrm{L}-1}}\text{d}t\nonumber\\
&=\frac{\Gamma\left(1-0.5\alpha_\los;\lambda\pi r_g^2,\infty\right)-\Gamma\left(1-0.5\alpha_\los;\lambda\pi R_B^2,\infty\right)}{{2\left(\kappa M_\textrm{t} M_\textrm{r} C_L\right)^{-1}(\lambda\pi)^{1-0.5 \alpha_\los}}}.
\end{align}
\end{small}
\end{corollary}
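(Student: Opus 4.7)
The starting point is the exact expression from Corollary 1, $\bar{P}_\textrm{con}(\lambda,0)=\xi(\varrho_\los\bar{P}_\los+\varrho_\nlos\bar{P}_\nlos)$, with $\bar{P}_\los$ and $\bar{P}_\nlos$ as in (\ref{alos})--(\ref{anlos}). The approximation in step $(a)$ combines three independent simplifications: (i) replacing the exponential LOS probability $p(r)=e^{-\beta r}$ by the step (``LOS ball'') function $\mathbbm{1}\{r<R_B\}$ with $R_B=(-\ln(1-\varrho_\los)/(\lambda\pi))^{1/2}$, in the same spirit as Proposition 1; (ii) retaining only the direct signal contribution $P_\mathrm{t}M_\mathrm{t}M_\mathrm{r}C_\los r^{-\alpha_\los}$ from the serving LOS BS and discarding the aggregate terms $\Psi_\los$ and $\Psi_\nlos$; and (iii) discarding the NLOS-tagged contribution $\varrho_\nlos\bar{P}_\nlos$. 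The plan is to justify each of these and then evaluate the remaining one-dimensional integral in closed form via a change of variables that yields the generalized incomplete Gamma function.

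First I would substitute the step function $p(r)\approx\mathbbm{1}\{r<R_B\}$ into the expressions of Lemmas 1--3. Under this replacement one has $\int_{0}^{x}vp(v)\mathrm{d}v=x^{2}/2$ for $x<R_B$, while the integral $\int_{0}^{\rho_\los(x)}(1-p(v))v\mathrm{d}v$ appearing in $\tilde{\tau}_\los$ vanishes whenever $\rho_\los(x)<R_B$, which holds throughout the range of integration in the regime of interest. Consequently the tagged-LOS density reduces to the well-known nearest-neighbour form, and in particular $\varrho_\los\tilde{\tau}_\los(r)\approx 2\pi\lambda r e^{-\lambda\pi r^{2}}$ for $r\in[r_g,R_B]$, while $\varrho_\los\approx B_\los=1-e^{-\lambda\pi R_B^{2}}$. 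This directly produces the kernel $2\pi\lambda P_\mathrm{t}M_\mathrm{t}M_\mathrm{r}C_\los\cdot r^{1-\alpha_\los}e^{-\lambda\pi r^{2}}$ appearing inside the integral once the signal term of $\bar{P}_\los$ is combined with $\varrho_\los\tilde{\tau}_\los(r)$.

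Next I would argue that the dropped terms are negligible relative to the retained one. For the interference-like quantities $\Psi_\los$ and $\Psi_\nlos$, note that the tagged BS has perfect beam alignment (gain $M_\mathrm{t}M_\mathrm{r}$), whereas (\ref{fac3})--(\ref{fac4}) carry the much smaller weighted average directivity $\sum_{i=1}^{4}D_i p_i$; combined with path-loss decay these contributions are dominated by the direct serving-link signal over the dominant range $[r_g,R_B]$. For the NLOS-tagged contribution, observe that $\varrho_\nlos=e^{-\lambda\pi R_B^{2}}$ under the step approximation (the user falls into this case only when no BS lies within the LOS ball), and even then $\bar{P}_\nlos$ scales like $r^{-\alpha_\nlos}$ with $\alpha_\nlos$ substantially larger than $\alpha_\los$, so the product $\varrho_\nlos\bar{P}_\nlos$ is an order of magnitude smaller than $\varrho_\los\bar{P}_\los$ for the BS densities of interest. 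Assembling what remains and absorbing $\xi$ into the constant (or equivalently setting $\xi=1$ for this useful-power upper bound) yields the integral representation on the first line of (\ref{approx2}).

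The closed-form identity on the second line is then a direct computation. I would make the change of variables $u=\lambda\pi t^{2}$, for which $t^{1-\alpha_\los}\mathrm{d}t=\tfrac{1}{2}(\lambda\pi)^{\alpha_\los/2-1}u^{-\alpha_\los/2}\mathrm{d}u$, obtaining $\int_{r_g}^{R_B}t^{1-\alpha_\los}e^{-\lambda\pi t^{2}}\mathrm{d}t=\tfrac{1}{2}(\lambda\pi)^{\alpha_\los/2-1}\Gamma(1-0.5\alpha_\los;\lambda\pi r_g^{2},\lambda\pi R_B^{2})$, and finally splitting the generalized incomplete Gamma function via $\Gamma(h;u,v)=\Gamma(h;u,\infty)-\Gamma(h;v,\infty)$ to match the stated form. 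The main obstacle is the second step: cleanly justifying that both $\Psi_\los,\Psi_\nlos$ and the entire NLOS-tagged branch are simultaneously negligible, since these are order-of-magnitude arguments tied to the mmWave regime (large $M_\mathrm{t}M_\mathrm{r}$ relative to side-lobe gains, and large $\alpha_\nlos-\alpha_\los$ gap) rather than uniform bounds; everything else is an essentially mechanical calculation.
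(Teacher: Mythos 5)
Your proposal is correct and follows essentially the same route as the paper, whose proof is a one-line remark invoking the Appendix B simplifications (LOS ball, no fading, no NLOS) together with dropping all contributions except the serving LOS link; you have simply made explicit the kernel $\varrho_\los\tilde{\tau}_\los(r)\approx 2\pi\lambda r e^{-\lambda\pi r^2}$ and the change of variables $u=\lambda\pi t^2$ that the authors leave implicit, and your closed-form evaluation matches \eqref{approx2}. Your honest caveat that the negligibility of $\Psi_\los$, $\Psi_\nlos$ and the NLOS-tagged branch is a heuristic, regime-dependent argument (validated numerically in the paper) rather than a uniform bound is consistent with how the paper itself treats this approximation.
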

\begin{proof}
The proof follows by using the simplifying assumptions of Appendix B, and by further ignoring the contributions from all but the serving LOS BS. 
\end{proof}
This approximation suggests that the average harvested power is mainly determined by the LOS serving link. Note that ${\bar{P}}_{\textrm{con}}(\cdot,\cdot)$ grows linearly with the transmit power $P_\textrm{t}$ since $\kappa=2\pi\lambda P_{\rm{t}}$. Depending on the path loss exponent $\alpha_\mathrm{L}$, it may exhibit a sublinear to approximately-linear scaling with the BS density $\lambda$. When $\alpha_\mathrm{L}$ is large, the denominator $t^{\alpha_\mathrm{L}-1}$ in (a) overshadows the impact of $\lambda$ on the numerator $e^{-\pi\lambda t^2}$. Therefore, the scaling behavior is essentially determined by $\kappa=2\pi\lambda P_\textrm{t}$, which is linear in $\lambda$. This suggests that increasing the transmit power or BS density has almost the same effect on the average harvested power when $\alpha_\mathrm{L}$ is large (e.g., when $\alpha_\mathrm{L}=3$). 
Also note that (\ref{approx2}) is relatively simple as it is expressed in terms of the incomplete Gamma function only. 
\subsubsection*{\textbf{Nonconnected case}}
Having discussed the connected case, we now consider the case where a user operates in the nonconnected mode. The following theorem characterizes the energy coverage probability at a typical user for the nonconnected case.
\begin{theorem}\normalfont
In a mmWave network of density $\lambda$, the energy coverage probability for the nonconnected case $\textrm{P}_\textrm{ncon}\left(\lambda,\psi\right)$ given an outage threshold $\psi$ can be evaluated using
\begin{align}
\textrm{P}_\textrm{ncon}\left(\lambda,\psi\right)\approx
\sum\limits_{k=0}^{N}\left(-1\right)^{k}\binom{N}{k}
e^{-{\overset{}{\Upsilon}}_{k,1}\left(\lambda,\hat{\psi},r_g\right)-{\overset{}{\Upsilon}}_{k,2}\left(\lambda,\hat{\psi},r_g\right)},
\end{align}
where ${\overset{}{\Upsilon}}_{k,1}\left(\cdot\right)$ and ${\overset{}{\Upsilon}}_{k,2}\left(\cdot\right)$ are given by (\ref{fac1}) and (\ref{fac2}) respectively, $\hat{\psi}=\max\left(\frac{\psi}{\xi},\psi_{\min}\right)$, and $r_g$ is the minimum link distance.
\end{theorem}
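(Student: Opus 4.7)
The plan is to reduce the nonconnected energy coverage $P_{\textrm{ncon}}(\lambda,\psi)=\Pr\{\gamma>\psi\}$ to the evaluation of the Laplace transform of the aggregate received power $Y=\sum_{\ell\in\Phi}P_\textrm{t}\delta_\ell H_\ell g_\ell(r_\ell)$ at a discrete set of points, and then to evaluate that Laplace transform using the probability generating functional (PGFL) of the PPP. Unlike the connected case of Theorem 1, there is no tagged serving BS to isolate, so every BS enters symmetrically as an independently marked point with mark $(\delta_\ell,H_\ell,\textrm{LOS/NLOS})$. First I would use (\ref{eq1}) to rewrite the event $\{\gamma>\psi\}=\{\xi Y\mathbbm{1}_{\{Y>\psi_{\min}\}}>\psi\}$ as the simpler event $\{Y>\hat\psi\}$ with $\hat\psi=\max(\psi/\xi,\psi_{\min})$, so that it suffices to compute the CCDF of $Y$ at $\hat\psi$ over BSs located in the annulus $\{r\ge r_g\}$.

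The crucial approximation step is to replace the indicator $\mathbbm{1}\{Y>\hat\psi\}$ with the CDF of a normalized Gamma$(N,1/N)$ random variable evaluated at $Y/\hat\psi$, namely $(1-e^{-aY/\hat\psi})^N$, where $a=N(N!)^{-1/N}$ makes the bound tight (Alzer's inequality). Expanding by the binomial theorem and exchanging expectation with summation yields
\begin{align*}
P_{\textrm{ncon}}(\lambda,\psi)\approx\sum_{k=0}^{N}(-1)^k\binom{N}{k}\mathcal{L}_Y\!\left(\tfrac{ak}{\hat\psi}\right).
\end{align*}
For each $s=ak/\hat\psi$, I would split $\Phi$ into independent thinned PPPs of LOS and NLOS BSs (with retention probabilities $p(r)$ and $1-p(r)$), so that $\mathcal{L}_Y(s)=\mathcal{L}_{Y_\los}(s)\mathcal{L}_{Y_\nlos}(s)$. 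For the LOS piece, the PGFL together with averaging over the Nakagami fading (which supplies $(1+sP_\textrm{t}\delta C_\los/(N_\los t^{\alpha_\los}))^{-N_\los}$) and over the discrete directivity mark $\delta\in\{D_1,\ldots,D_5\}$ with weights $\{p_1,\ldots,p_5\}$ gives exactly $\exp(-\Upsilon_{k,1}(\lambda,\hat\psi,r_g))$ once the integration is taken from $r_g$ to $\infty$; the $D_5=0$ summand drops out because its contribution to the PGFL integrand is $1-1=0$. An identical computation for the NLOS PPP, with $p(t)$ replaced by $1-p(t)$ and LOS parameters replaced by NLOS parameters, produces $\exp(-\Upsilon_{k,2}(\lambda,\hat\psi,r_g))$. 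Substituting these into the binomial sum gives the claimed expression.

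The main obstacle is justifying the Alzer-type approximation for the aggregate $Y$ rather than for a single Gamma tail event, which is the same step invoked implicitly in Theorem 1; once it is in place, the marks factor cleanly because the directivity variables, fading variables, and LOS/NLOS classifications are independent across distinct points of $\Phi$, and the PGFL step reduces to a one-dimensional radial integral. It is worth remarking that the absence of any serving-BS conditioning is precisely what makes the answer considerably simpler than (\ref{los})--(\ref{nlos}): there is no outer integral against $\tilde{\tau}_\los$ or $\tilde{\tau}_\nlos$ and no $\zeta_k^\los$ factor, with the single parameter $r_g$ playing the role that the random serving-link distance plays in the connected analysis.
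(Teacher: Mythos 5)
Your proposal is correct and is essentially the argument the paper intends: the paper's own proof of this theorem simply states that it follows from the proof of Theorem~1 (Appendix~A), and your derivation reproduces exactly that argument — the reduction to $\Pr\{Y>\hat\psi\}$, the dummy normalized-Gamma/Alzer approximation with binomial expansion, and the PGFL evaluation over the independently thinned LOS/NLOS and directivity-gain sub-PPPs — with the serving-BS conditioning removed and the radial integrals starting at $r_g$. Your closing observation about why the outer integral against $\tilde{\tau}_\los$, $\tilde{\tau}_\nlos$ and the $\zeta_k$ factors disappear is precisely the right way to see the simplification.
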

\begin{proof}
The proof follows from that of Theorem 1 and is therefore omitted.
\end{proof}
Similar to the connected case, the energy coverage probability for this case is also a function of the propagation conditions, the network density and the antenna geometry parameters. We note that the expressions in Theorem 2 are efficient to compute, obviating the need for further simplification. We now consider the average harvested power for the nonconnected case.
\begin{proposition}\normalfont
The average harvested power for the nonconnected case ${\bar{P}}_{\textrm{ncon}}\left(\lambda,\psi\right)$ for an energy outage threshold $\psi\in\left[\psi_\textrm{min},\infty\right)$ is given by
${\bar{P}}_{\textrm{ncon}}\left(\lambda,\psi\right)=
\int_{\psi}^{\infty}P_\textrm{ncon}\left(\lambda,x\right)\text{d}x+\psi P_\textrm{ncon}\left(\lambda,\psi\right).$
\end{proposition}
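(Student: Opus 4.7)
The plan is to mirror the argument used for Proposition 2, since Proposition 3 has exactly the same structure but with $P_{\textrm{ncon}}$ in place of $P_{\textrm{con}}$. The quantity $\bar{P}_{\textrm{ncon}}(\lambda,\psi)$ should be interpreted as the expected useful harvested power, namely $\mathbb{E}[\gamma\, \mathbbm{1}_{\{\gamma > \psi\}}]$, where $\gamma = \xi Y \mathbbm{1}_{\{Y > \psi_{\min}\}}$ from \eqref{eq1}, evaluated for a typical nonconnected user (so that $\Pr\{\gamma > x\} = P_{\textrm{ncon}}(\lambda,x)$ by Theorem 2, under the convention $\hat{\psi} = \max(\psi/\xi,\psi_{\min})$ with $\psi \geq \psi_{\min}$).

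First I would observe that $\gamma \geq 0$ almost surely, so that $Z := \gamma\, \mathbbm{1}_{\{\gamma > \psi\}}$ is also a nonnegative random variable. Applying the standard tail-integral identity $\mathbb{E}[Z] = \int_0^\infty \Pr\{Z > t\}\,\mathrm{d}t$, I would split the integration domain at $t = \psi$. For $t \in [0,\psi)$, the event $\{Z > t\}$ coincides with $\{\gamma > \psi\}$, contributing $\psi\,\Pr\{\gamma > \psi\} = \psi\, P_{\textrm{ncon}}(\lambda,\psi)$. For $t \in [\psi,\infty)$, the indicator is automatically active on $\{Z > t\}$, so $\{Z > t\} = \{\gamma > t\}$, which contributes $\int_\psi^\infty P_{\textrm{ncon}}(\lambda,x)\,\mathrm{d}x$. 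Summing these two contributions yields the claimed expression.

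Since Proposition 3 is essentially a one-line consequence of the CCDF interpretation of $P_{\textrm{ncon}}$, there is no real obstacle. The only thing worth stressing is the assumption $\psi \geq \psi_{\min}$, which ensures $\hat{\psi} = \psi/\xi$ throughout $[\psi,\infty)$ and thereby guarantees that $P_{\textrm{ncon}}(\lambda,\cdot)$ as given by Theorem 2 really is the CCDF of $\gamma$ on this range; if this were not assumed, one would additionally have to account for the activation-threshold jump at $\psi_{\min}$ in the tail of $\gamma$. Given the hypothesis $\psi \in [\psi_{\min},\infty)$ in the statement, no such correction is needed, and the proof closes immediately.
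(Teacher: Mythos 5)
Your proof is correct and follows essentially the same route as the paper, which simply notes that $\gamma$ is nonnegative and treats $P_{\textrm{ncon}}(\lambda,\cdot)$ as its CCDF (referring back to the proof of Proposition 2). Your explicit tail-integral computation with the split at $t=\psi$, and the remark about $\psi\geq\psi_{\min}$ ensuring no correction at the activation threshold, just spells out the details the paper leaves implicit.
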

\begin{proof}
The proof follows from that of Proposition 2 and is therefore omitted.
\end{proof}
\begin{corollary}\normalfont
The average harvested power for the limiting case $\lim\limits_{\psi\rightarrow 0}{\bar{P}}_{\textrm{ncon}}\left(\lambda,\psi\right)={\bar{P}}_{\textrm{ncon}}\left(\lambda,0\right)$
is given by
\begin{align}\label{equation19}
{\bar{P}}_{\textrm{ncon}}\left(\lambda,0\right)=
\xi\left({\Psi}_\los\left(r_g\right)+{\Psi}_\nlos\left(r_g\right)\right),
\end{align}
where ${\Psi}_\los\left(\cdot\right)$ and ${\Psi}_\nlos\left(\cdot\right)$ are given in (\ref{fac3}) and (\ref{fac4}) respectively.
\end{corollary}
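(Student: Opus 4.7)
The plan is to obtain the limiting expression by combining Proposition~3 with a direct moment computation for the shot-noise process $Y=\sum_{\ell\in\Phi(\lambda)}P_\textrm{t}\delta_\ell H_\ell g_\ell(r_\ell)$. Proposition~3 already gives
$\bar{P}_{\textrm{ncon}}(\lambda,\psi)=\int_{\psi}^{\infty}P_{\textrm{ncon}}(\lambda,x)\,\textrm{d}x+\psi P_{\textrm{ncon}}(\lambda,\psi)$, and since $P_{\textrm{ncon}}$ is bounded by $1$, the second summand vanishes as $\psi\downarrow 0$. Recognising $P_{\textrm{ncon}}(\lambda,\cdot)$ as the CCDF of the nonnegative random variable $\gamma$ then identifies the limit with $\mathbb{E}[\gamma]$, which, up to the activation-threshold indicator that can be absorbed in the spirit of Corollary~2, equals $\xi\,\mathbb{E}[Y]$.

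The core computation is therefore $\mathbb{E}[Y]$, which I would evaluate via Campbell's theorem applied to the marked PPP $\Phi(\lambda)$. Using independence of the marks $(\delta_\ell,H_\ell)$ from the point locations, together with $\mathbb{E}[H_\ell]=1$ (normalized Gamma) and $\mathbb{E}[\delta_\ell]=\sum_{i=1}^{4}D_i p_i$ (the $D_5=0$ atom drops out), Campbell's formula gives
\begin{align*}
\mathbb{E}[Y]=2\pi\lambda P_\textrm{t}\sum_{i=1}^{4}D_i p_i\int_{r_g}^{\infty}\!\!\bigl[C_\los t^{-\alpha_\los}p(t)+C_\nlos t^{-\alpha_\nlos}(1-p(t))\bigr]t\,\textrm{d}t,
\end{align*}
where the lower cutoff $r_g$ is inherited from the minimum-link-distance convention used throughout Section~III. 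Next I would split this integral along the LOS/NLOS decomposition induced by the blockage function $p(\cdot)$ of Section~\ref{secNet}.

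For the LOS contribution, with $\kappa=2\pi\lambda P_\textrm{t}$, the integral collapses immediately to $\Psi_\los(r_g)$ as defined in (\ref{fac3}). For the NLOS contribution I would write $1-p(t)=1-p(t)$ and split the resulting $t^{-(\alpha_\nlos-1)}$ integral as $\int_{r_g}^{\infty}t^{-(\alpha_\nlos-1)}\textrm{d}t-\int_{r_g}^{\infty}t^{-(\alpha_\nlos-1)}p(t)\,\textrm{d}t$. Provided $\alpha_\nlos>2$ (the standard mmWave regime), the first piece evaluates to $r_g^{-(\alpha_\nlos-2)}/(\alpha_\nlos-2)$, giving exactly $\Psi_\nlos(r_g)$ in (\ref{fac4}). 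Multiplying by $\xi$ and summing yields the stated formula.

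The only delicate point is the exchange of limits with the CCDF integral and the treatment of the activation-threshold indicator $\mathbbm{1}_{\{Y>\psi_\textrm{min}\}}$ in (\ref{eq1}); I would address this the same way Corollary~1 does, namely by interpreting the $\psi\to 0$ limit as an upper bound on the useful harvested power, so that the indicator is dropped and the result reduces cleanly to $\xi\,\mathbb{E}[Y]$. Beyond that, the argument is a standard Campbell-theorem calculation with two twists specific to the mmWave setting, (i) independence of the directivity-gain marks and (ii) the LOS/NLOS thinning of $\Phi(\lambda)$, both of which are already encapsulated in the quantities $\Psi_\los$ and $\Psi_\nlos$.
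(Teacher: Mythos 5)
Your proposal is correct and follows essentially the same route as the paper: the paper proves this corollary by appeal to the proof of Corollary 1 (Appendix C), which is precisely the Campbell's-theorem computation of the mean of the shot-noise field with $\mathbb{E}[H_\ell]=1$, $\mathbb{E}[\delta_\ell]=\sum_i D_ip_i$, and the LOS/NLOS split via $p(t)$ — here without the serving-BS term or the conditioning on link type, which is exactly what you do. Your explicit handling of the CCDF-integral limit and the activation-threshold indicator is a more careful write-up of the same argument.
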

\begin{proof}
The proof follows from that of Corollary 1 and is therefore omitted.
\end{proof}
The average harvested power for the nonconnected case scales \textit{linearly} with the transmit power and the BS density.  This follows from (\ref{equation19}) as both ${\Psi}_\los\left(\cdot\right)$ and ${\Psi}_\nlos\left(\cdot\right)$ relate linearly with the transmit power and density via the term $\kappa=2\pi\lambda P_\textrm{t}$. This also suggests that increasing the transmit power or density has the same effect on the average harvested power. Note that this is different from the connected case where the path loss exponent affects how average harvested power scales with the BS density.
\subsection{Results and Design Insights}\label{secResults1}
We first verify the accuracy of the analytical expressions presented in Section \ref{secStoch} using simulations. 
We then study how key design parameters such as the antenna beam pattern affects the energy coverage probability in purely connected ($\epsilon\rightarrow 1$) and nonconnected ($\epsilon\rightarrow 0$) networks.
We also compare the performance of mmWave energy harvesting with lower frequency solutions.
After developing key insights for purely connected/nonconnected scenarios, we provide energy coverage results for the general case ($0<\epsilon<1$), where the network serves both types of users.
\subsubsection*{Validation}
In the following plots, the users are assumed to be equipped with a single omnidirectional receive antenna, the mmWave carrier frequency is set to 28 GHz, the blockage constant $\beta=0.0071$~\cite{bai2015}, and $\psi>\psi_\textrm{min}$. In other words, for a given $\psi$, the plots are valid for any $\psi_\textrm{min}<\psi$. Note that for the less relevant case when $\psi<\psi_\textrm{min}$, the energy coverage probability flattens out, and is specified by $\textrm{P}_\textrm{con}\left(\lambda,\psi_\textrm{min}\right)$ or $\textrm{P}_\textrm{ncon}\left(\lambda,\psi_\textrm{min}\right)$. 
 Without loss of generality, we set the rectifier efficiency $\xi=1$ since this parameter does not impact the shape of the results, i.e., setting $\xi<1$ results in shifting all the curves to the left by the same amount. {We assume the rectifier efficiency to be the same when comparing mmWave and UHF. Note that there are no standard values for $\xi$ since prior work has reported widely varying values\cite{valenta2014,tabesh2015power} depending on the device technology, operating frequency, etc. For example, \cite{tabesh2015power,Charthad2016mmWaveWPT} suggest that a mmWave energy harvesting circuit may have better overall performance than its lower frequency counterparts.}  
Fig. \ref{fig:Con_Ant} and \ref{fig:Con_Den} plot the energy coverage probability for the connected case using different model parameters. The analytical results based on Theorem 1 are obtained using $N=5$ terms in the approximation. The simulation results are generated using Monte Carlo simulations with 10,000 runs. Similarly, using Theorem 2, Fig. \ref{fig:Uncon_Ant} and \ref{fig:Uncon_Den} plot the energy coverage probability for the nonconnected case. There is a nice agreement between analytical and simulation results.
\subsubsection*{Connected case ($\epsilon\rightarrow 1$)}
In Fig. \ref{fig:Con_Ant}, we plot the energy coverage probability with three distinct transmit beam patterns for a given network density. 
We observe that the energy harvesting performance improves with narrower beams, i.e., smaller beamwidths and larger directivity gains.
As the beamwidth decreases, relatively fewer beams from the neighboring BSs would be incident on a typical user. But the beams that do reach, will have larger directivity gains, resulting in an overall performance improvement.
This is possible due to the use of potentially large antenna arrays at the mmWave BSs. Note that this performance boost will possibly be limited due to the ensuing EIRP (equivalent isotropically radiated power) or other safety regulations on future mmWave systems \cite{wu2015human}.

For the purpose of comparison, we also plot the energy coverage probability for UHF energy harvesting under realistic assumptions. Given the current state-of-the-art \cite{ghosh2010fundamentals,rappaport2014millimeter}, the UHF BSs are assumed to have 8 transmit antennas each. Further, they are assumed to employ maximal ratio transmit beamforming to serve a connected user. For the channel model, we assume an IID Rayleigh fading environment and a path loss exponent of $3.6$ (no blockage is considered). The network density is set to $25$ nodes/$\textrm{km}^2$, which corresponds to an \emph{average} distance of about $113$ m to the closest UHF BS. The carrier frequency is set to $2.1$ GHz and the transmission bandwidth is 100 MHz.
As can be seen from Fig. \ref{fig:Con_Ant}, mmWave energy harvesting could provide considerable performance gain over its lower frequency counterpart. Moreover, the anticipated dense deployments of mmWave networks would further widen this gap. This effect is illustrated in Fig. \ref{fig:Con_Den}, where we plot the energy coverage probability for different mmWave network densities for a given transmit antenna beam pattern.
In Fig. \ref{fig:Con_Rate}, we use Proposition 2 to plot the average harvested power at a typical mmWave user against the transmit array size. The plots based on Corollary 1 are also included. 
We note that the limiting case $\psi\rightarrow 0$ treated in Corollary 1 closely approximates the average harvested power obtained using Proposition 2.
This figure also confirms our earlier intuition that mmWave energy harvesting can benefit from (i) potentially large antenna arrays at the BSs, and (ii) high BS density, which would be the key ingredients of future mmWave cellular systems. 
Fig. \ref{fig:Con_Rate2} shows how the path loss exponent impacts the scaling behavior of the average harvested power with BS density, corroborating the discussion following Corollary 2.

\begin{figure*}[t]
	\centering
		\subfloat[][]{%
			\centering
			\includegraphics[width=3.3in]{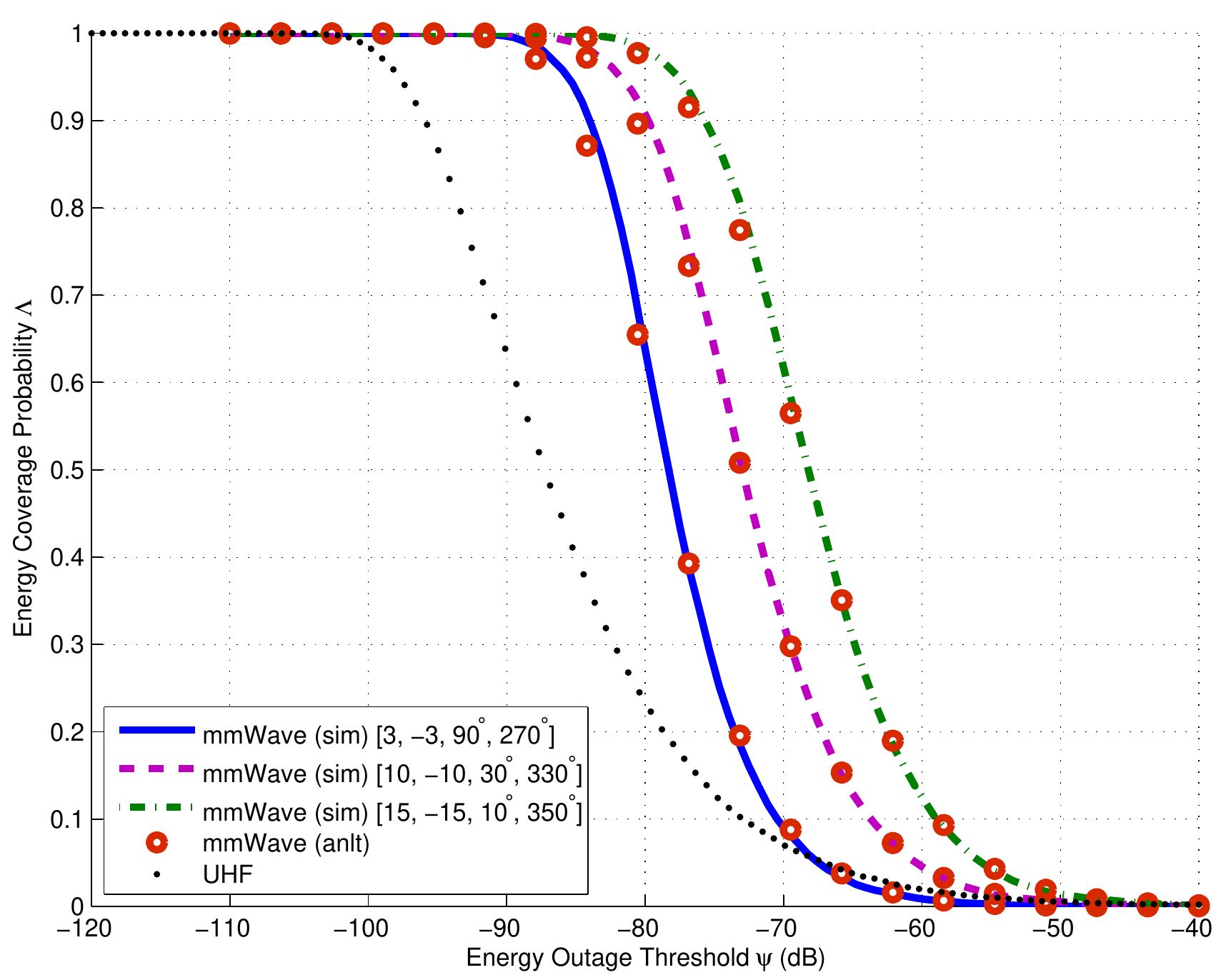}
			\label{fig:Con_Ant}} \hfill
		\subfloat[][]{
			\centering
			\includegraphics[width=3.3in]{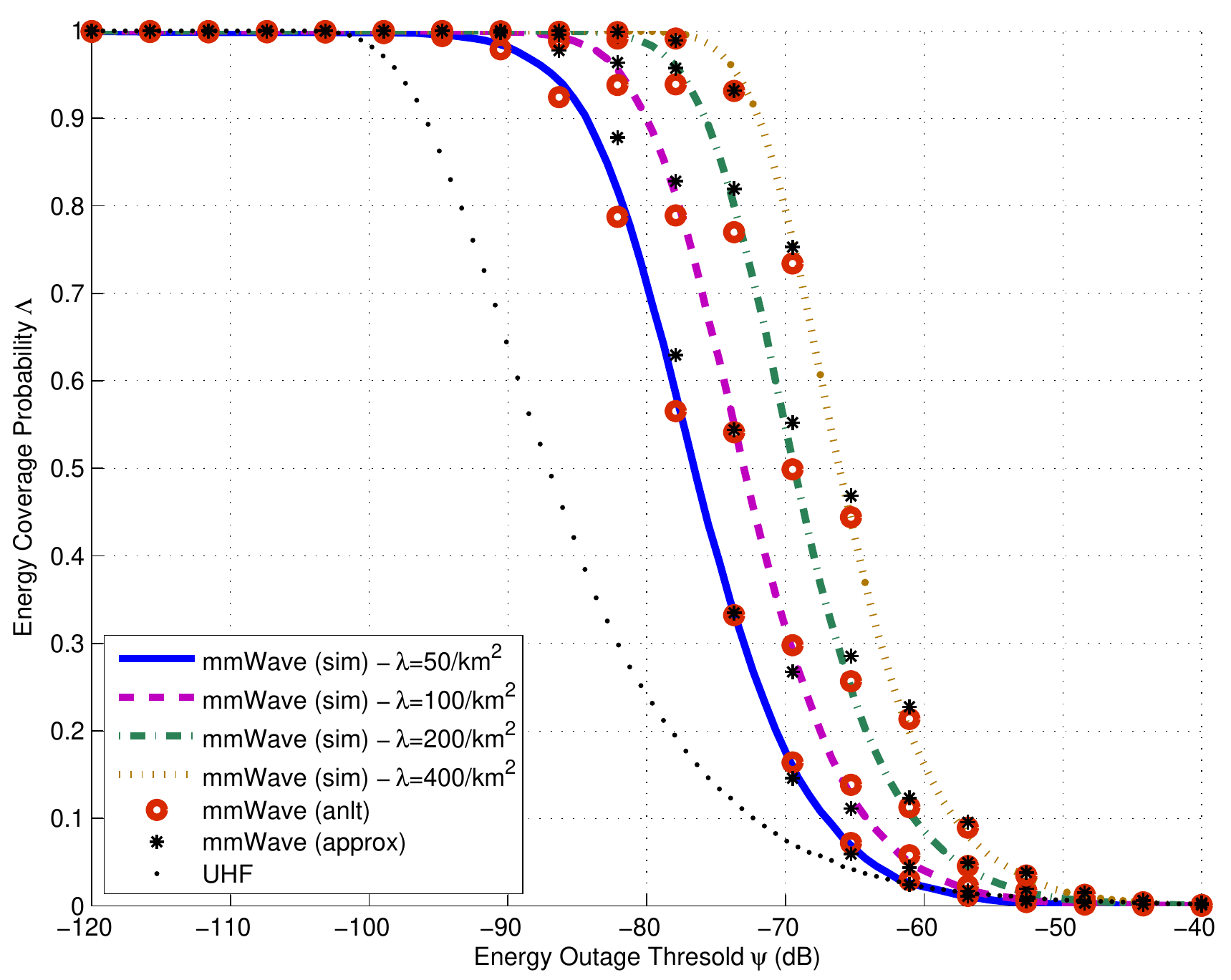}
	    	\label{fig:Con_Den}}	
\caption{(a) Energy coverage probability $\Lambda(\epsilon,\psi,\lambda)$ for different transmit antenna beam patterns parameterized by $[M_\textrm{t}, m_\textrm{t}, \theta_\textrm{t}, \bar{\theta}_\textrm{t}]$ in a purely connected network ($\epsilon=1$, $\lambda=100/{\text{k}{\text{m}}^2}$). The performance improves with narrower beams for this case. $P_\textrm{t}=13$ dB, $W= 100$ MHz, $\alpha_\los=2$, $\alpha_\nlos=4$, $N_\los=2$, $N_\nlos=3$, and $r_g=1$ m. There is a nice agreement between Monte Carlo simulation (sim) results and the analytical (anlt) results obtained using Theorem 1 with $N=5$ terms. (b) Energy coverage probability $\Lambda(1,\psi,\lambda)$ for different network densities for connected users. Transmit beam pattern is fixed to $[10, -10, 30^\circ, 330^\circ]$. Other parameters are the same as given in Fig. \ref{fig:Con_Ant}. 
Also included are the results based on the analytical approximation (approx) in Proposition 1. The approximation becomes tighter as the density increases.}	
\end{figure*}

\begin{figure*}[t]
	\centering
		\subfloat[][]{%
			\centering
			\includegraphics[width=3.3in]{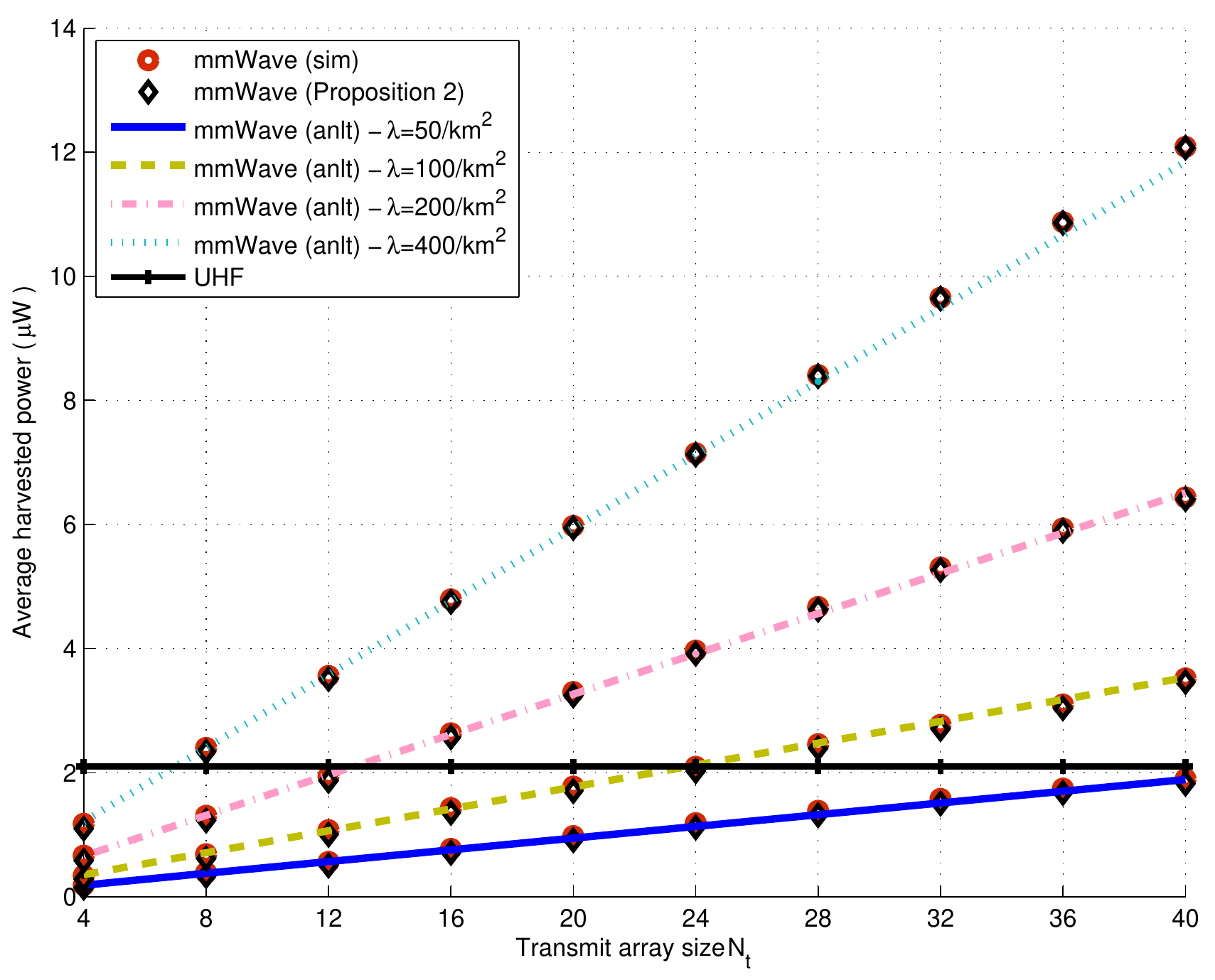}
			\label{fig:Con_Rate}} \hfill
		\subfloat[][]{
			\centering
			\includegraphics[width=3.3in]{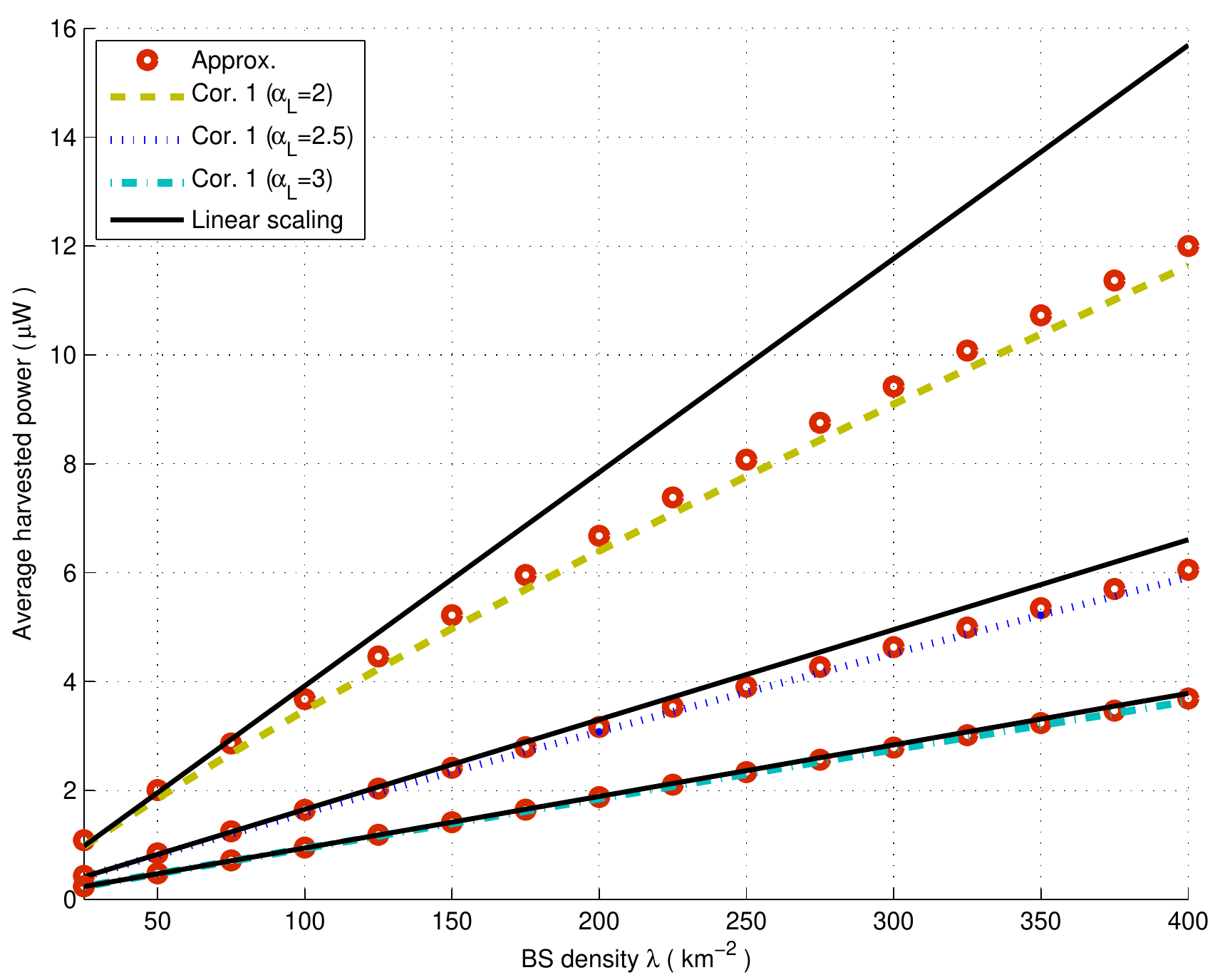}
	    	\label{fig:Con_Rate2}}	
\caption{(a) The average harvested power in a connected mmWave network for different number of BS antennas $N_\textrm{t}$ and deployment densities $\lambda$. 
Results based on Proposition 2 are obtained for $\psi=-35$ dBm. The analytical (anlt) results based on Corollary 1 ($\psi\rightarrow 0$) are validated using Monte Carlo simulations (sim); and closely approximate the average harvested energy obtained using Proposition 2. The transmit antenna beam patterns are calculated using the approximations used for obtaining Fig. \ref{fig:tradeoff}. Other simulation parameters are the same as used in Fig. \ref{fig:Con_Ant}. For comparison, a plot for a UHF system is also included. (b) Plots the average harvested power (Corollary 1) vs. BS density for $N_\textrm{t}=32$. The plot validates the approximation in (\ref{approx2}). Moreover, it shows how the average power scales with the BS density for different path loss exponents $\alpha_\mathrm{L}$. For illustration, also included are the solid lines corresponding to the (hypothetical) case when average power scales linearly with density. The scaling tends to become linear as $\alpha_\mathrm{L}$ is increased.}	
\end{figure*}
\subsubsection*{Nonconnected case ($\epsilon\rightarrow 0$)}
We now analyze the energy harvesting performance when the harvesting devices operate in the nonconnected mode.
In a stark contrast to the connected case, Fig. \ref{fig:Uncon_Ant} shows that for the nonconnected case, mmWave energy harvesting could benefit from using wider beams.
This is because BS connectivity (alignment) is critical for the nonconnected case. With wider beams, it is more likely that a mmWave BS gets aligned with a receiver, albeit at the expense of the beamforming gain.
Furthermore, a comparison with UHF energy harvesting shows that mmWave energy harvesting gives a comparable performance to its UHF counterpart.
Similarly, Fig. \ref{fig:Uncon_Den} plots the energy coverage probability for different deployment densities. We note that performance can be substantially improved with denser deployments, which would be a key feature of future mmWave cellular systems.

\begin{figure*}[t]
	\centering
		\subfloat[][]{%
			\centering
			\includegraphics[width=3.3in]{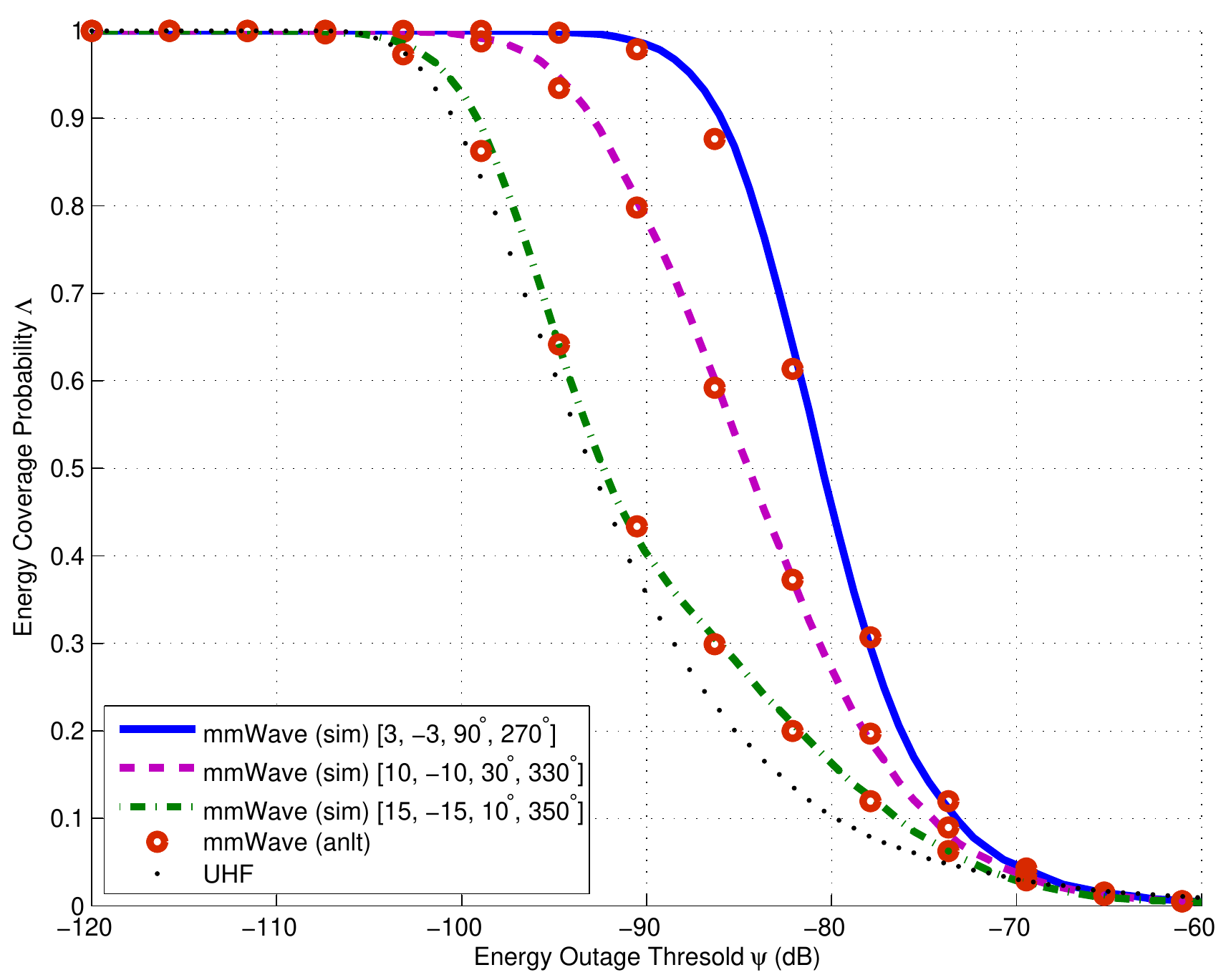}
			\label{fig:Uncon_Ant}} \hfill
		\subfloat[][]{
			\centering
			\includegraphics[width=3.3in]{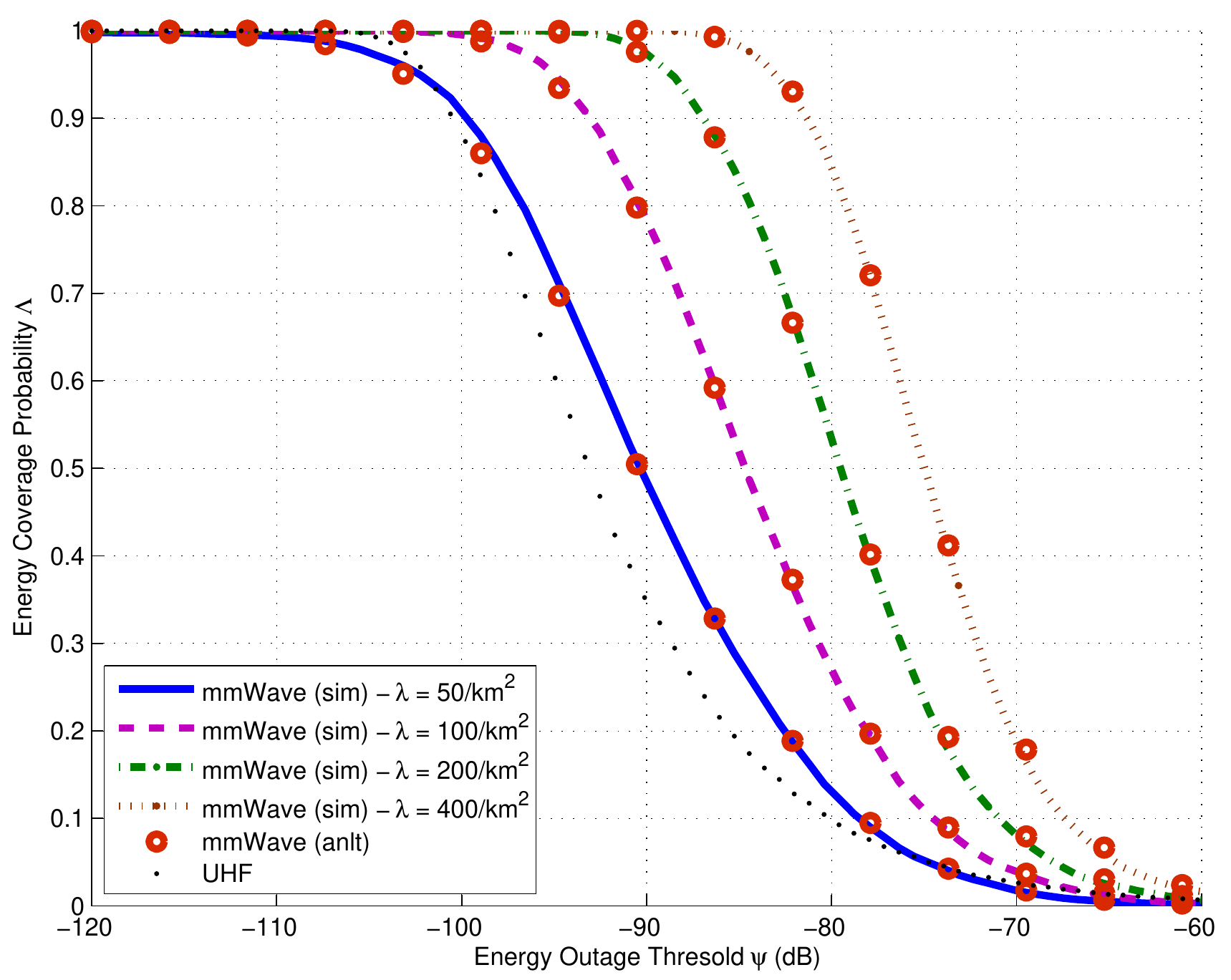}
	    	\label{fig:Uncon_Den}}	
\caption{(a) Energy coverage probability $\Lambda(\epsilon,\psi,\lambda)$ for different transmit antenna beam patterns in a nonconnected network ($\epsilon=0$, $\lambda=100/{\text{k}{\text{m}}^2}$). The performance improves with wider beams for this case. Other simulation parameters are same as given in Fig. \ref{fig:Con_Ant}. Monte Carlo simulation (sim) results validate the analytical (anlt) results obtained using Theorem 2 with $N=5$ terms. (b) Energy coverage probability $\Lambda(0,\psi,\lambda)$ for different network densities for nonconnected users. Transmit beam pattern is fixed to $[10, -10, 30^\circ, 330^\circ]$. Other parameters are same as given in Fig. \ref{fig:Con_Ant}. Monte Carlo simulation (sim) results validate the analytical (anlt) results obtained using Theorem 2 with $N=5$ terms.}	
\end{figure*}
\subsubsection*{General case ($0<\epsilon<1$)}
Having presented the energy coverage trends for the two extreme network scenarios, we now consider the general case where the user population consists of both connected and nonconnected users.
We expect this to be the likely scenario for reasons explained in the network model (Section \ref{secNet}).
As described in Section \ref{secAnt}, an antenna beam pattern can be characterized by the half power beamwidth and directivity gain for both the main and side lobes. By tuning these parameters, the beam pattern can be particularized to a given antenna array.
As an example, we assume that uniform linear arrays (ULA) are deployed at the mmWave BSs. We use the following relations to approximate the main and side lobe beamwidths as a function of the transmit array size:
$
\theta_\textrm{t}\approx\frac{360}{\pi}\arcsin\left(\frac{0.892}{N_\textrm{t}}\right)
$ and
$\bar{\theta}_\textrm{t}\approx
\frac{720}{\pi}\left|\arcsin\left(\frac{2}{N_\textrm{t}}\right)\right|
$ \cite{van2004detection}. We use $M_\textrm{t}=10V\log\left(N_\textrm{t}\right)$ and $m_{\textrm{t}}=V\left(M_\textrm{t}-12\right)$ for the directivity gains of the main and side lobes\cite{van2004detection}.
To ensure the power normalization, the constant $V$ is chosen to satisfy $\frac{\theta_\textrm{t}}{2\pi}M_{\textrm{t}}+\frac{\bar{\theta}_\textrm{t}}{2\pi}m_{\textrm{t}}=1$.

In Fig. \ref{fig:tradeoff}, we plot the overall energy coverage probability $\Lambda(\epsilon,\psi,\lambda)$ against transmit array size $N_\textrm{t}$ for different values of parameter $\epsilon$. We find that the optimal transmit array size depends on the statistics of the user population. For example, when $\epsilon$ is large, it is desirable to use large antenna arrays at the BSs. When $\epsilon$ is small, it is favorable to use small antenna arrays to improve the overall energy coverage probability.
Depending on the network load (or the user population \emph{mix}) captured via $\epsilon$, the energy coverage probability can be substantially improved by intelligent antenna switching schemes.
Since the parameter $\epsilon$ would typically vary over large time-scales, such schemes would be practically feasible.
\begin{figure} [h]
\centerline{
\includegraphics[width=0.9\columnwidth]{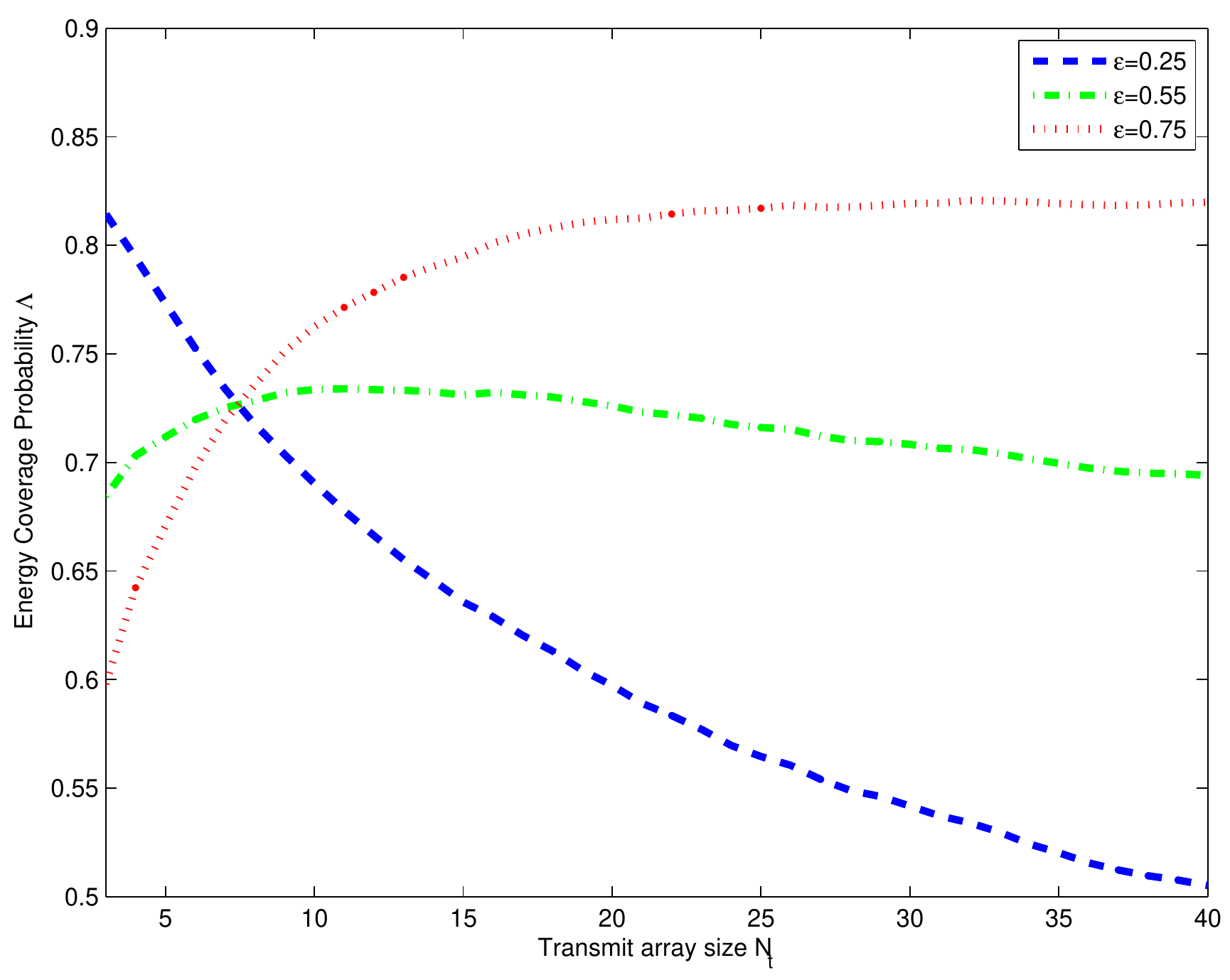}
}
\caption{The overall energy coverage probability $\Lambda(\epsilon,\psi,\lambda)$ for different values of $\epsilon$. 
Depending on the fraction of users operating in connected/nonconnnected modes, the transmit array size (which controls the beamforming beamwidth in this example) can be optimized to maximize the network-wide energy coverage. This could translate into massive gains given that the number of served devices would be potentially large.
The users are assumed to be equipped with a single omnidirectional receive antenna. The energy outage threshold $\psi$ is $-70$ dB for $\Phi_{u,\textrm{con}}$ and $-85$ dB for $\Phi_{u,\textrm{ncon}}$. $P_\textrm{t}=13$ dB, $\lambda=200/\textrm{km}^2$. Channel parameters are the same as used in Fig. \ref{fig:Con_Ant}.}
\label{fig:tradeoff}
\end{figure}

Having presented the energy coverage trends for mmWave energy harvesting, we now consider the scenario where the user attempts to extract both energy and information from the incident mmWave signals.
\section{MmWave Simultaneous Information and Power Transfer}\label{secSWIPT}
In this section, we consider the case where the energy harvesting device also attempts to decode information from the received signals, in what is known as simultaneous wireless information and power transfer (SWIPT)\cite{EnergyHarvestWirelessCommSurvey2015,WPCSurvey2015}. We now assume that the energy harvesting receiver is also equipped with an information decoding circuit.
We focus on the case where a given user is already aligned with its serving BS, i.e., $\epsilon=1$ for this section.
Further, we consider a power splitting receiver architecture\cite{WPCSurvey2015} where the received signal is split using factors $\sqrt{\nu}$ and $\sqrt{1-\nu}$, $\nu\in[0,1]$. A fraction $\sqrt{1-\nu}$ of received signal is available for energy harvesting, while the remaining signal is used for information decoding.
With this notation, the signal-to-interference-plus-noise ratio (SINR) at a typical receiver can be expressed as $\textrm{SINR}=\frac{\nu S}{\nu(I+\sigma^2)+\sigma_c^2}$, where $S=
P_\textrm{t}M_\textrm{t}M_\textrm{r} H_0 g_0(r_0)$ denotes the useful signal power and $I=\sum_{\ell>0,\ell\in\Phi(\lambda)\setminus\mathbb{B}(r_g)}^{}
P_\textrm{t}\delta_\ell H_\ell g_\ell(r_\ell)$ gives the aggregate interference power from the neighboring BSs. $\sigma^2$ is the thermal noise power before splitting, while $\sigma_c^2$ captures possible signal degradation after power splitting. Similarly, $\gamma=\left(1-\nu\right)\xi\left(S+I+\sigma^2\right)\mathbbm{1}_{\{{S+I+\sigma^2>\psi_{\rm{min}}}\}}$ denotes the
received signal power fed to the energy harvester.
Note that a user will be in outage if the harvested energy and/or the SINR fall below their respective thresholds.
We now define $P_\text{suc}(\lambda,T,\psi,\nu)=\Pr\left[\text{SINR}>T,\gamma>\psi\right]$ to be the probability of successful reception given the SINR outage threshold $T$, the energy outage threshold $\psi$, and the power splitting ratio $\nu$.
Extending the results from the previous sections, we now provide an analytical expression to characterize the system performance with SWIPT.

\subsection{Stochastic Geometry Analysis} \label{subsec:Analysis_SWIPT}
Before stating the main result of this section, we first provide a lemma for the SINR coverage probability at a mmWave receiver \cite{bai2015}.
\begin{lemma}[{Modified from \cite[Theorem 1]{bai2015}}]\normalfont
In a mmWave network of density $\lambda$, the SINR coverage probability $\textrm{P}_\textrm{cov}\left(\lambda,T,\nu\right)$ at a SWIPT device, given an SINR outage threshold $T$ and a power splitting ratio $\nu$, is given by
\begin{align}
\textrm{P}_\textrm{cov}\left(\lambda,T,\nu\right)&=
\textrm{P}_\textrm{cov,L}\left(\lambda,T,\nu\right)\varrho_\textrm{L}+\textrm{P}_\textrm{cov,N}\left(\lambda,T,\nu\right)\varrho_\textrm{N},
\end{align}
where $\varrho_\textrm{L}=1-\varrho_\textrm{N}$ is defined in Lemma 2, and $\textrm{P}_\textrm{cov,L}(\cdot)$ gives the conditional SINR coverage probability given the device is served by a LOS BS, and can be approximated as
\begin{small}
\begin{align}\label{los2}
\textrm{P}_\textrm{cov,L}\left(\lambda,T,\nu\right)&\approx 
\sum\limits_{k=1}^{N_\los}\left(-1\right)^{k+1}\binom{N_\los}{k}
\int\limits_{r_g}^{\infty}
e^{-\frac{kc_\los {r}^{\alpha_\los}T \left(\sigma^2+{\nu^{-1}\sigma^2_c}\right)}{PC_\los M_\textrm{t}M_\textrm{r}}}\nonumber\\
\hspace{0.5 in}\times
& \quad e^{-{\overset{}{\Delta}}_{k,1}\left(T,r\right) -{\overset{}{\Delta}_{k,2}\left(T,r\right)}}\tilde{\tau}_\los\left(r\right)\text{d}r.
\end{align}
\end{small}Similarly, the conditional SINR coverage probability for the NLOS case $\textrm{P}_\textrm{cov,N}(\cdot)$ is given by
\allowdisplaybreaks
\begin{small}
\begin{align}\label{nlos2}
\textrm{P}_\textrm{cov,N}\left(\lambda,T,\nu\right)&\approx 
\sum\limits_{k=1}^{N_\nlos}\left(-1\right)^{k+1}\binom{N_\nlos}{k}
\int\limits_{r_g}^{\infty}
e^{-\frac{kc_\nlos r^{\alpha_\nlos}T \left(\sigma^2+{\nu^{-1}\sigma^2_c}\right)}{PC_\nlos M_\textrm{t}M_\textrm{r}}}\nonumber\\
& \times \hspace{0.1 in}e^{-{\overset{}{\Delta}}_{k,3}\left(T,r\right)-{\overset{}{\Delta}_{k,4}\left(T,r\right)}}\tilde{\tau}_\nlos\left(r\right)\text{d}r,
\end{align}
\end{small}
where
\begin{small}
\begin{align}
{\overset{}{\Delta}_{k,1}}\left(T,x\right)=
2\pi\lambda\sum_{i=1}^{4}p_i \int\limits_{x}^{\infty}&\left(1-{\left[1+\frac{c_\los k\tilde{D}_iTx^{\alpha_\los}} {{N_\los} t^{\alpha_\los}}\right]^{-N_\los}}\right)\nonumber\\
&\hspace{0.5in}\times p(t)t\text{d}t,
\end{align}
\end{small}
\begin{small}
\begin{align}
{\overset{}{\Delta}_{k,2}}\left(T,x\right)=
2\pi\lambda\sum_{i=1}^{4}p_i \int\limits_{\rho_{\los}(x)}^{\infty}&\left(1-{\left[1+\frac{c_\los k\tilde{D}_iC_\nlos T x^{\alpha_\los}} {{N_\nlos}C_\los t^{\alpha_\nlos}}\right]^{-N_\nlos}}\right)\nonumber\\
&\hspace{0.5in}\times \left(1-p(t)\right)t\text{d}t,
\end{align}
\end{small}
\begin{small}
\begin{align}
{\overset{}{\Delta}_{k,3}}\left(T,x\right)=
2\pi\lambda\sum_{i=1}^{4}p_i \int\limits_{\rho_{\nlos}(x)}^{\infty}&\left(1-{\left[1+\frac{c_\nlos k\tilde{D}_iC_\los T x^{\alpha_\nlos}} {{N_\los}C_\nlos t^{\alpha_\los}}\right]^{-N_\los}}\right)
\nonumber\\
&\hspace{0.5in}\times p(t)t\text{d}t,
\end{align}
\end{small}
\begin{small}
\begin{align}
{\overset{}{\Delta}_{k,4}}\left(T,x\right)&=
2\pi\lambda\sum_{i=1}^{4}p_i \int\limits_{x}^{\infty}\left(1-{\left[1+\frac{c_\nlos k\tilde{D}_i T x^{\alpha_\nlos}} {{N_\nlos} t^{\alpha_\nlos}}\right]^{-N_\nlos}}\right)\nonumber\\
&\hspace{1in}\times \left(1-p(t)\right)t\text{d}t,
\end{align}
\end{small}$\tilde{D}_i=\frac{D_i}{M_\textrm{t} M_\textrm{r}}$ for $i\in\{1,2,3,4,5\}$, $c_\los=N_\los\left(N_\los!\right)^{-\frac{1}{N_\los}}$ and $c_\nlos=N_\nlos\left(N_\nlos!\right)^{-\frac{1}{N_\nlos}}$.
\label{lem4}
\end{lemma}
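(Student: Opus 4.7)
The plan is to adapt the proof of \cite[Theorem 1]{bai2015} essentially unchanged, because the SWIPT receiver only modifies the effective noise. Rearranging $\textrm{SINR}=\nu S/(\nu(I+\sigma^2)+\sigma_c^2)>T$ gives the equivalent event $\{S>T(I+\sigma^2+\nu^{-1}\sigma_c^2)\}$, so from the information-decoding side the analysis is identical to that of \cite{bai2015} with noise variance inflated from $\sigma^2$ to $\sigma^2+\nu^{-1}\sigma_c^2$. Starting from the law of total probability on the LOS/NLOS status of the serving BS and applying Lemma~2, write
\[
\textrm{P}_\textrm{cov}(\lambda,T,\nu)=\varrho_\textrm{L}\,\textrm{P}_\textrm{cov,L}(\lambda,T,\nu)+\varrho_\textrm{N}\,\textrm{P}_\textrm{cov,N}(\lambda,T,\nu),
\]
and condition further on the serving-link length $r$ using the densities $\tilde{\tau}_\los$ and $\tilde{\tau}_\nlos$ from Lemma~3.

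Consider the LOS-serving case; the NLOS case is symmetric. Here $S=P_\textrm{t}M_\textrm{t}M_\textrm{r}H_0 C_\los r^{-\alpha_\los}$ with $H_0\sim\Gamma(N_\los,1/N_\los)$, so conditional on $I$ the coverage event reduces to $\{H_0>g\}$ where $g=Tr^{\alpha_\los}(I+\sigma^2+\nu^{-1}\sigma_c^2)/(P_\textrm{t}M_\textrm{t}M_\textrm{r}C_\los)$. Apply Alzer's tight bound
\[
\Pr[H_0>g]\approx\sum_{k=1}^{N_\los}(-1)^{k+1}\binom{N_\los}{k}e^{-k c_\los g},\quad c_\los=N_\los(N_\los!)^{-1/N_\los},
\]
and then take expectation over $I$. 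The deterministic piece $\sigma^2+\nu^{-1}\sigma_c^2$ factors out as the exponential prefactor visible in the integrand of (\ref{los2}), while the random piece becomes the Laplace transform $\mathcal{L}_I(s)$ evaluated at $s=k c_\los r^{\alpha_\los}T/(P_\textrm{t}M_\textrm{t}M_\textrm{r}C_\los)$.

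To compute $\mathcal{L}_I(s)$, decompose $I$ into independent contributions from LOS and NLOS interferers, obtained by thinning $\Phi$ with probabilities $p(t)$ and $1-p(t)$. Because the typical user is served by a LOS BS at distance $r$ under the strongest-average-power association of Lemma~2, no LOS interferer is closer than $r$ and no NLOS interferer is closer than $\rho_\los(r)=(C_\nlos/C_\los)^{1/\alpha_\nlos}r^{\alpha_\los/\alpha_\nlos}$. Apply the PGFL of the inhomogeneous PPPs on these exclusion regions, independently averaging over the directivity mark $\delta_\ell$, which takes value $D_i$ with probability $p_i$ for $i=1,\dots,4$ (the zero-gain mark $D_5$ contributes a factor of one and drops out). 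The Gamma MGF gives $\mathbb{E}_H[e^{-sP_\textrm{t}D_iHC_\los t^{-\alpha_\los}}]=(1+sP_\textrm{t}D_iC_\los/(N_\los t^{\alpha_\los}))^{-N_\los}$, which after substituting the chosen $s$ and writing $\tilde D_i=D_i/(M_\textrm{t}M_\textrm{r})$ collapses to the integrand of $\Delta_{k,1}(T,r)$ for the LOS interferers and to that of $\Delta_{k,2}(T,r)$ for the NLOS interferers. The NLOS-serving case follows the same template with the two link types swapped and exclusion radius $\rho_\nlos(r)$, producing $\Delta_{k,3}$ and $\Delta_{k,4}$.

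The main obstacle is purely bookkeeping: tracking which interferer class contributes on which radial range, keeping the LOS/NLOS thinning factors $p(t)$ and $1-p(t)$ attached to the correct PPP, and checking that the normalization $\tilde D_i=D_i/(M_\textrm{t}M_\textrm{r})$ emerges consistently after the serving-link gain $M_\textrm{t}M_\textrm{r}$ cancels inside the SINR threshold. No genuinely new analytic step is needed beyond \cite{bai2015}; the only substantive change is the replacement $\sigma^2\to\sigma^2+\nu^{-1}\sigma_c^2$ in the noise-dependent exponential prefactor. The $\approx$ in the statement comes solely from the Alzer step, which is exact only for $N_\los=N_\nlos=1$ but tight for the small integer shape parameters used in mmWave channel models.
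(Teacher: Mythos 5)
Your proposal is correct and follows essentially the same route as the source the paper relies on: the paper gives no proof of Lemma~\ref{lem4} (it defers entirely to the cited Theorem~1 of \cite{bai2015}), and your reconstruction---rewriting the power-split SINR event as $S>T(I+\sigma^2+\nu^{-1}\sigma_c^2)$, conditioning on the LOS/NLOS serving link via Lemmas~2 and~3, applying Alzer's bound to the serving-link Gamma fading, and evaluating the interference Laplace transform by thinning and the PGFL with the directivity marks $D_i$---is exactly that argument with the noise inflated to $\sigma^2+\nu^{-1}\sigma_c^2$, and it mirrors the techniques the paper itself uses in Appendix~A. All the bookkeeping you flag (exclusion radii $r$ and $\rho_\los(r)$, the normalization $\tilde D_i=D_i/(M_\textrm{t}M_\textrm{r})$, the cancellation of the $k=0$ term so the sum starts at $k=1$) checks out against the stated expressions.
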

The following theorem provides the main analytical result of this section.
\begin{theorem}\normalfont
In a mmWave network of density $\lambda$, the success probability $P_\text{suc}(\lambda,T,\psi,\nu)$ given the SINR outage threshold $T$, the energy outage threshold $\psi$, and the power splitting ratio $\nu$ is given by
\begin{small}
\begin{align}\label{eq:thm3}
P_{\text{suc}}\left(\lambda,T,\psi,\nu\right)&\approx P_{\text{cov}}\left(\lambda,T,\nu\right) \tilde{P}_{\text{con}}\left(\lambda,\mu\right)\nonumber\\
&\qquad\quad+P_{\text{con}}\left(\lambda,\varphi\right)\left[1-\tilde{P}_{\text{con}}\left(\lambda,\mu\right)\right],
\end{align}
\end{small}where the SINR coverage probability $P_{\textrm{cov}}(\cdot)$ can be evaluated using the expressions in Lemma \ref{lem4}, while the energy coverage probability $P_{\textrm{con}}(\cdot)$ follows from Theorem 1. We further define ${\tilde{P}}_{\textrm{con}}(\lambda,\mu)
=\tilde{P}_{\textrm{con},\los}(\lambda,\mu)\varrho_\los+{\tilde{P}}_{\textrm{con},\nlos}(\lambda,\mu) \varrho_\nlos$, where $\tilde{P}_{\textrm{con},\los}(\cdot)$ and $\tilde{P}_{\textrm{ncon},\nlos}(\cdot)$ are specified by (\ref{los}) and (\ref{nlos}) respectively, by setting $\zeta_k^\los(\cdot)=\zeta_k^\nlos(\cdot)=1$. Moreover, $\mu$ and $\varphi$ depend on several parameters including the power splitting ratio $\nu$, the SINR outage threshold $T$, the energy outage threshold $\psi$, the harvester activation threshold $\psi_\textrm{min}$, the rectifier efficiency $\xi$, and the noise parameters. Further,
$\mu=\frac{\hat{\psi}}{(1-\nu)(1+T)}-\sigma^2-\frac{\sigma_c^2}{\nu(1+\frac{1}{T})},$
$\varphi=\frac{\hat{\psi}}{\left(1-\nu\right)},$ and $\hat{\psi}=\max\left(\frac{\psi}{\xi},\psi_\textrm{min}\right)$.
\end{theorem}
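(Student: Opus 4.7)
The plan is to rewrite the joint event $\{\mathrm{SINR}>T\}\cap\{\gamma>\psi\}$ in terms of the aggregate interference $I$ from the non-serving BSs, identify a critical interference level $\mu$ at which the two coverage conditions swap dominance, and then invoke an approximate independence between the useful signal $S$ and $I$.

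First I would rewrite both coverage conditions as inequalities in $(S,I)$. Because $\hat{\psi}=\max(\psi/\xi,\psi_{\mathrm{min}})$, the harvester activation indicator is automatic once $(1-\nu)\xi(S+I+\sigma^2)>\psi$, so the energy condition $\gamma>\psi$ becomes $S+I+\sigma^2>\varphi$ with $\varphi=\hat{\psi}/(1-\nu)$. Rearranging $\mathrm{SINR}>T$ gives $S>T(I+\sigma^2)+T\sigma_c^2/\nu$. Equating the two lower bounds on $S$ and solving for $I$ produces exactly $I=\mu$ as stated. This value of $\mu$ yields a clean dichotomy: on $\{I>\mu\}$ the SINR inequality is strictly the tighter of the two, hence $\{\mathrm{SINR}>T\}\subseteq\{\gamma>\psi\}$; on $\{I\le\mu\}$ the energy inequality is tighter, hence $\{\gamma>\psi\}\subseteq\{\mathrm{SINR}>T\}$. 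Partitioning by the value of $I$ therefore gives
\begin{align*}
P_{\mathrm{suc}}=\Pr[\mathrm{SINR}>T,\,I>\mu]+\Pr[\gamma>\psi,\,I\le\mu].
\end{align*}

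Next I would apply the standard stochastic-geometry heuristic that the tagged-link signal $S$ and the aggregate interference $I$ from the residual PPP are approximately independent. Under this, the two joint probabilities factorise: $\Pr[\mathrm{SINR}>T,\,I>\mu]\approx P_{\mathrm{cov}}(\lambda,T,\nu)\,\tilde{P}_{\mathrm{con}}(\lambda,\mu)$ and $\Pr[\gamma>\psi,\,I>\mu]\approx P_{\mathrm{con}}(\lambda,\varphi)\,\tilde{P}_{\mathrm{con}}(\lambda,\mu)$, where $\tilde{P}_{\mathrm{con}}(\lambda,\mu)$ is the probability that the interference alone exceeds $\mu$. This is exactly Theorem~1 with $\zeta_k^{\mathrm{L}}=\zeta_k^{\mathrm{N}}=1$, because those factors encode the Gamma--Laplace contribution of the tagged BS to the energy CCDF; zeroing them out removes $S$ from the effective energy sum. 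Writing $\Pr[\gamma>\psi,\,I\le\mu]=P_{\mathrm{con}}(\lambda,\varphi)-\Pr[\gamma>\psi,\,I>\mu]$ and substituting yields the claimed convex-combination form~(\ref{eq:thm3}).

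The main obstacle is justifying the product approximation. Strictly speaking $I$ and $S$ share the underlying PPP, the blockage field, and the directivity-gain randomness, and $\{I>\mu\}$ is directly correlated with $\{\mathrm{SINR}>T\}$ through $I$'s appearance in the SINR denominator, so the factorisation is a heuristic rather than an identity. I would argue it in the same spirit as the Gamma--Laplace CCDF approximations already used in Theorems~1 and~2, and validate the resulting expression against Monte Carlo simulation, consistent with the paper's treatment of its other approximations.
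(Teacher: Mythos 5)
Your proposal is correct and follows essentially the same route as the paper's Appendix D: recasting both events as lower bounds on $S$, locating the crossover interference level $\mu$, partitioning on $\{I>\mu\}$ versus $\{I\le\mu\}$, and then decoupling via the same heuristic (your ``approximate independence'' of $\{\mathrm{SINR}>T\}$ and $\{\gamma>\psi\}$ from $\{I>\mu\}$ is exactly the paper's step of dropping the conditioning on $I$ inside the expectations). Your closing caveat about the factorisation being a heuristic matches the paper's own acknowledgement that step $(c)$ is an approximation validated numerically.
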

\begin{proof}
See Appendix D.
\end{proof}
Note that $\tilde{P}_{\textrm{con}}(\lambda,\mu)$ in (\ref{eq:thm3}) is the interference CCDF evaluated at parameter $\mu$. It plays a key role in determining the operating mode of the system. Though the interference is harmful for information decoding, it can be beneficial for energy harvesting.
When the interference is high, the SINR coverage probability will typically limit the success probability. In the other extreme, the energy coverage probability will play the limiting role.
Also note that the success probability can be optimized over the design paramter $\nu$, given other parameters. Moreover, we can recover Theorem 1 and Lemma 4 from (\ref{eq:thm3}) by letting $\psi\rightarrow0$ and $T\rightarrow0$, respectively.

Note that, in principle, the success probability at a connected mmWave energy harvesting or SWIPT device can be further improved by leveraging large antenna arrays at the receiver, thanks to smaller wavelengths.
Though our analytical model allows the users to have receive antenna arrays, it implicitly assumes the presence of ideal RF combining circuitry consisting of power-hungry components such as phase shifters, multiple RF chains, etc. 
When large antenna arrays are used at the receiver, the power consumption due to additional antenna circuitry may get prohibitively high, overshadowing the array gains. 
As SWIPT typically targets low-power devices, we present a simple low-power receiver architecture in the next section. 
Note that the analytical results based on Theorem 3 can be interpreted as an upper bound on performance when the receiver consists of suboptimal components (as is the case in the following section).

\subsection{Low-power Receiver Architecture}
We now propose a novel architecture for a mmWave SWIPT receiver with multi-antenna array, as depicted in Fig. \ref{fig:Model_Switches_v1}. In this architecture, we assume per-antenna power splitting with parameter $\nu$ (as defined earlier). After power splitting, the input signal at each antenna passes through a rectifier, followed by a DC combiner that yields the harvested energy.
For the information path, after passing through power splitters, the received signals are first combined in the RF domain using a combining vector $\bw$. The resulting signal is then decoded in the baseband.
Because they require extremely small power, the combining vector is assumed to be implemented using switches \cite{Mendez-Rial2015}, i.e., $\bw={[\textrm{w}_1,\cdots,\textrm{w}_{N_\mathrm{r}}]}^*\in[0,1]^{N_\textrm{r}}$. 
Note that both the signals for information decoding and energy harvesting are in the order of $\mu$W (Fig. \ref{fig:Cov_3D}). It is worth mentioning that recent results have shown that mmWave energy harvesting circuits can run with only a few $\mu$W~\cite{tabesh2015power,Charthad2016mmWaveWPT}.

\begin{figure} [t]
	\centerline{
		\includegraphics[width=0.9\columnwidth]{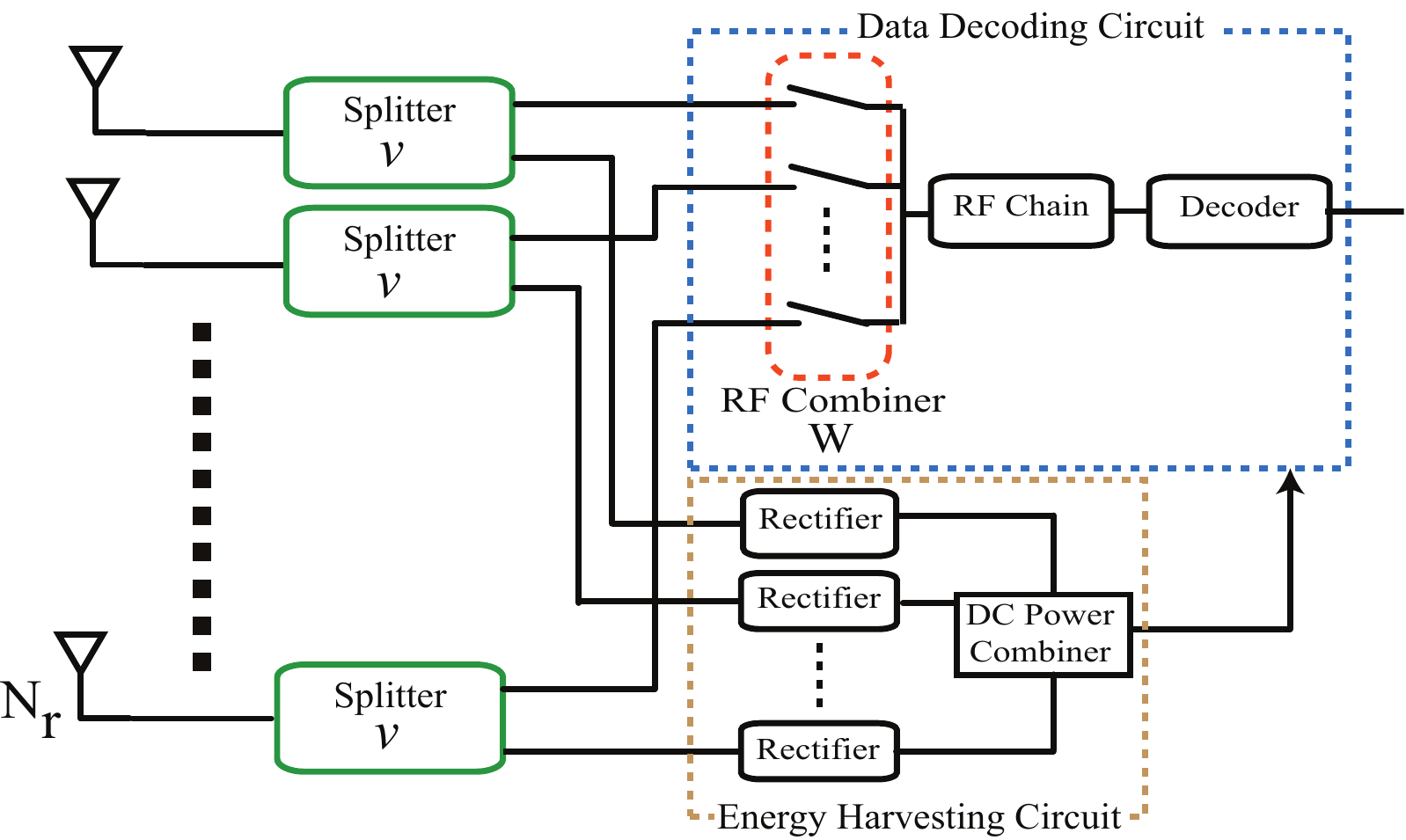}
	}
	\caption{Low power receiver architecture for SWIPT.}
	\label{fig:Model_Switches_v1}
\end{figure}

We now derive the combining gain expression for the proposed SWIPT receiver architecture in Fig. \ref{fig:Model_Switches_v1}. Let $y$ be the signal output at the RF combiner. If a BS applies a beamforming vector $\bff\in\mathbb{C}^{N_\textrm{t}\times 1}$ to send data symbol $s$ $\left(\text{where}\,\, \bbE\left[|s|^2\right]=P_\textrm{t}\right)$ to a target user, it follows that
\begin{equation}
y=\sqrt{\nu}\left[\bw^* \bH_\mathrm{d} \bff s + \bw^* \br_\mathrm{int} + \bw^* \bn\right],
\label{eq:Received_Signal}
\end{equation}
where $\bH_\mathrm{d}\in\mathbb{C}^{N_\textrm{r}\times N_\textrm{t}}$ is the channel between the user and its serving BS,  and $\br_\mathrm{int}$ is the received signal due to the interfering BSs. Since the channel between each user and its BS is assumed to be single-path, the channel matrix 
$\bH_\mathrm{d}= h_0\sqrt{g_0(r_0)} \ba_\mathrm{r}\left(\phi_\mathrm{r}\right) \ba_\mathrm{t}^{*} \left(\phi_\mathrm{t}\right)$,
%
where $\ba_\mathrm{t}\left(\phi_\mathrm{t}\right) $ and $\ba_\mathrm{r}\left(\phi_\mathrm{r}\right)$ are the array response vectors at the BS and user, respectively. Recall that $g_0(r_0)$ denotes the path gain from the serving BS, while $\phi_\mathrm{r}$ and $\phi_\mathrm{t}$ respectively denote the channel angle-of-arrival and angle-of-departure at the user and BS. If the channel is known at the BS, and given the antenna model in Section \ref{secAnt}, the BS will design the beamforming vector $\bff$ to maximize the beamforming gain, i.e., to have $\left|\ba_\mathrm{t}^*\left(\phi_\mathrm{t}\right) \bff \right|^2=N_\textrm{t}$. Denoting $\bar{\alpha}=h_0\sqrt{g_0(r_0)} \ba_\mathrm{t}^*\left(\phi_\mathrm{t}\right) \bff$, the received signal in \eqref{eq:Received_Signal} can be written as
\begin{equation}
y=\sqrt{\nu}\left( \bar{\alpha} \bw^* \ba_\mathrm{r}\left(\phi_\mathrm{r}\right) s + \bw^* \br_\mathrm{int} + \bw^* \bn \right).
\end{equation}
The post-combining SINR can then be expressed as
\begin{equation}\label{eq:maxsinr}
\text{SINR}=\frac{\nu P_\textrm{t}   \left|\alpha\right|^2 N_t \left|\bw^* \ba_\mathrm{r}\left(\phi_\mathrm{r}\right)\right|^2}{I+ \nu\bw^* \bw \sigma^2},
\end{equation}
where $\left| \bw^*  \ba_\mathrm{r}\left(\phi_\mathrm{r}\right) \right|^2$ represents the combining gain at the receiver, and $I$ denotes the aggregate interference power. The SINR in (\ref{eq:maxsinr}) can be maximized if the receiver designs the optimum combining vector, which can be implemented by activating certain antennas on or off.
This requires the receiver to have global channel knowledge, which is often challenging in practice. We relax this condition by assuming that the receiver has the angle-of-arrival information for the serving BS only. Ignoring the interference, we propose to design the combining vector by maximizing the $\text{SNR}=\frac{P_\textrm{t}\left|\bar{\alpha}\right|^2 N_t \left|\bw^* \ba_\mathrm{r}\left(\phi_\mathrm{\phi_\textrm{r}}\right)\right|^2} {\bw^* \bw \sigma^2}$ instead, i.e.,
the receiver designs its combining vector $\bw$ such that
\begin{align}
\bw^{\star}=\arg\max_{\bw\,\in \,[0,1]^{N_\textrm{r}}} \frac{\left|\bw^* \ba_\mathrm{r}\left(\phi_\mathrm{r}\right)\right|^2} {\bw^* \bw }.
\label{eq:Opt_Arch1}
\end{align}
The optimal solution to \eqref{eq:Opt_Arch1} can be found by an exhaustive search over all possible combinations of $\bw$. For large receive antenna arrays, this could entail high computational costs, which would further increase the power consumption. Therefore, it is important to consider computationally efficient approaches for designing the combining vector.
As outlined in Algorithm \ref{alg:alg_Arch1}, we propose a greedy solution for designing $\bw$ by (step-wise) activating only those antennas that boost the received SNR. 
With $\hat{\mathbf{w}}$ denoting the combining vector designed using Algorithm 1, the combining gain for the switch-based architecture can be defined as $M_c=\frac{\left|\sum_{i=1}^{N_r} \hat{\textrm{{w}}}_i e^{\j k d (i-1)  \cos\left(\phi_\mathrm{r}\right)} \right|^2}{|\hat{\mathbf{w}}|^2}$ where $k$ denotes the wavenumber and $d$ is the antenna element spacing.
Despite its low-complexity, numerical simulations in the next section reveal that our low-power greedy approach could give a good combining gain, without losing substantial performance compared to more advanced but power-hungry solutions.
\begin{algorithm} [!t]                     
	\caption{Greedy Switch Combining Design}          
	\label{alg:alg_Arch1}                           
	\begin{algorithmic} 
		\State \textbf{Input} $N_\textrm{r}$, $\phi_\mathrm{r}$
		\State \textbf{Initialization} $\bw=\boldsymbol{0}$, $\textrm{w}_1=1$
		\For {$i=2,\cdots,N_\mathrm{r}$}
		\If {$\frac{1}{i} \left|\sum_{n=1}^{i-1} \textrm{w}_n e^{\j k d (n-1)  \cos\left(\phi_\mathrm{\textrm{r}}\right)} + e^{\j k d (i-1) \cos\left(\phi_\mathrm{\textrm{r}}\right)}\right|^2 > \frac{1}{i-1}\left|\sum_{n=1}^{i-1} \textrm{w}_n e^{\j k d (n-1) \cos\left(\phi_\mathrm{\textrm{r}}\right)} \right|^2$}
		\State	$\textrm{w}_i=1$
		\EndIf
		\EndFor
	\end{algorithmic}
\end{algorithm}
\subsection{Results}
Fig. \ref{fig:Cov_3D} plots the overall success probability for a given transmit antenna beam pattern. The users are equipped with a single-antenna receiver, similar to the one in Fig. \ref{fig:Model_Switches_v1} with $N_\mathrm{r}=1$.
First, Fig. \ref{fig:Cov_3D} shows that a reasonable success probability can be obtained with mmWave SWIPT system for typical mmWave propagation and system parameters. Further, this plot illustrates that the power splitting ratio $\nu$ needs to be optimized for a given SINR outage threshold to maximize the overall success probability. Matching the intuition, the figure shows that in the low SINR outage regime (when $T$ is large), it is desirable to divert more power to the information decoding module, while a larger fraction of power needs to be portioned for the energy harvesting system in the high SINR outage regime (when $T$ is small). This trend is consistent with prior studies on SWIPT architectures\cite{WPCSurvey2015}. 

We now evaluate the performance of the proposed low-power receiver architecture for different number of receive antennas.
In Fig. \ref{fig:Cov_Switch}, the success probability $P_\textrm{suc}(\lambda,T,\psi,\nu)$ is plotted for a fixed transmit antenna beam pattern.
For the proposed architecture, the combining vector is obtained using Algorithm 1, and the curves are averaged over the angle-of-arrival parameter.
For comparison, we also plot the success probability for (fully digital) maximal ratio combining (MRC) receivers.
We observe that the success probability improves with the receive antenna array size. 
Further, when the SINR outage threshold $T$ is small, the success probability is mainly limited by the energy outage. This also explains why the success probability converges to a limit (determined by the energy outage threshold) as $T$ decreases. 
Moreover, there are diminishing returns as the number of antennas are increased. A comparison with power-hungry MRC receivers shows that the proposed switch-based architecture performs reasonably well. This is particularly desirable for future mmWave SWIPT devices.
\begin{figure} [t]
	\centerline{
		\includegraphics[width=\columnwidth]{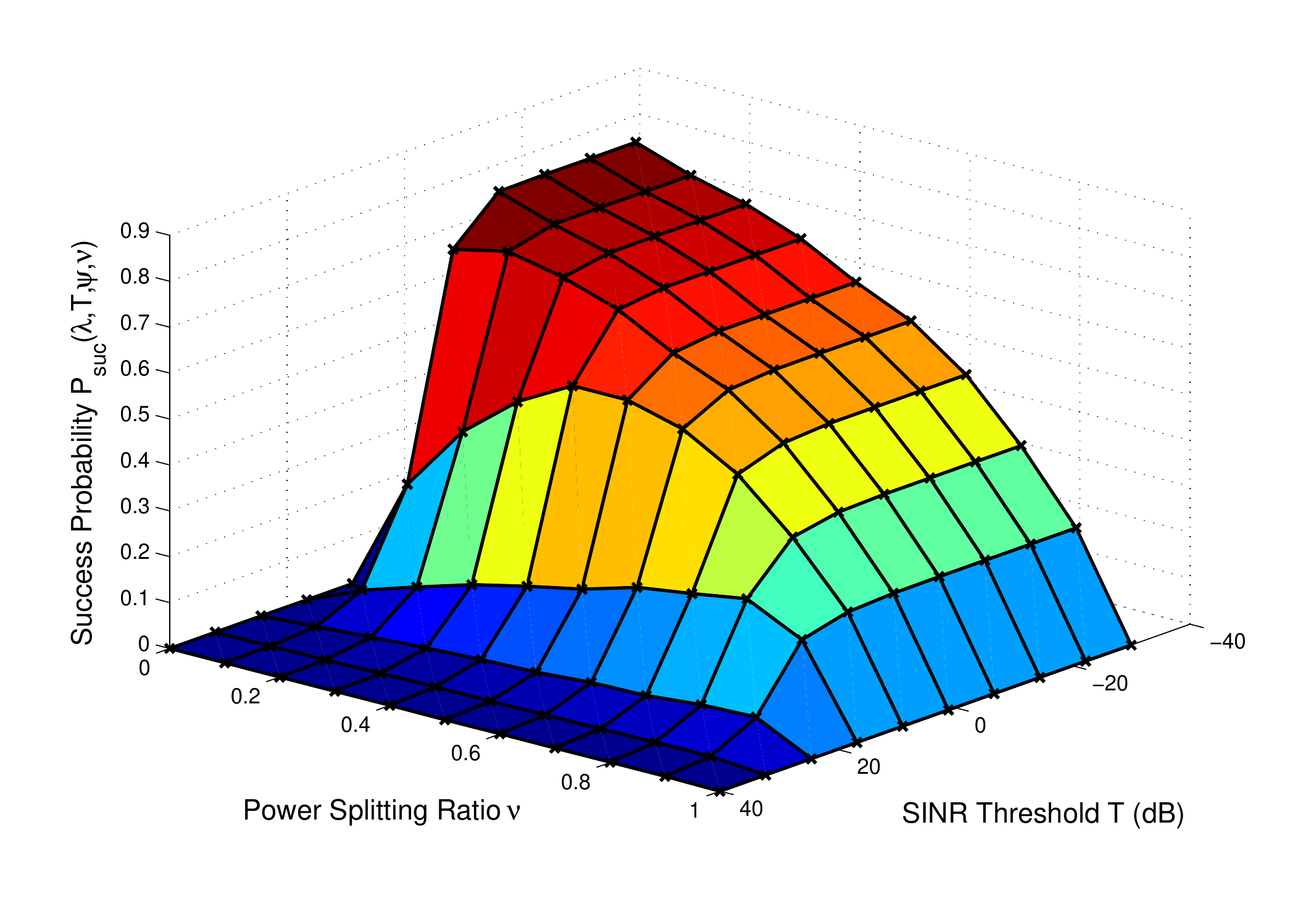}
	}
	\caption{A 3D plot showing the interplay between the success probability, the power splitting ratio $\nu$, and the SINR outage threshold $T$ for a given energy outage threshold $\psi$ and network density $\lambda$. As $T$ gets large, the system becomes SINR-limited, and the optimum value of $\nu$ increases, suggesting that a larger fraction of received signal should be used for information extraction to optimize the overall success probability. The transmit antenna beam pattern is set to $A_{15,-15,10^\circ, 350^\circ}$. Other parameters include $\psi=-70$ dB, $\sigma^2_c=-80$ dB, $\lambda=200/\text{km}^2$, and $P_\textrm{t}=43$ dBm.}
	\label{fig:Cov_3D}
\end{figure}

\begin{figure} [t]
	\centerline{
		\includegraphics[width=\columnwidth]{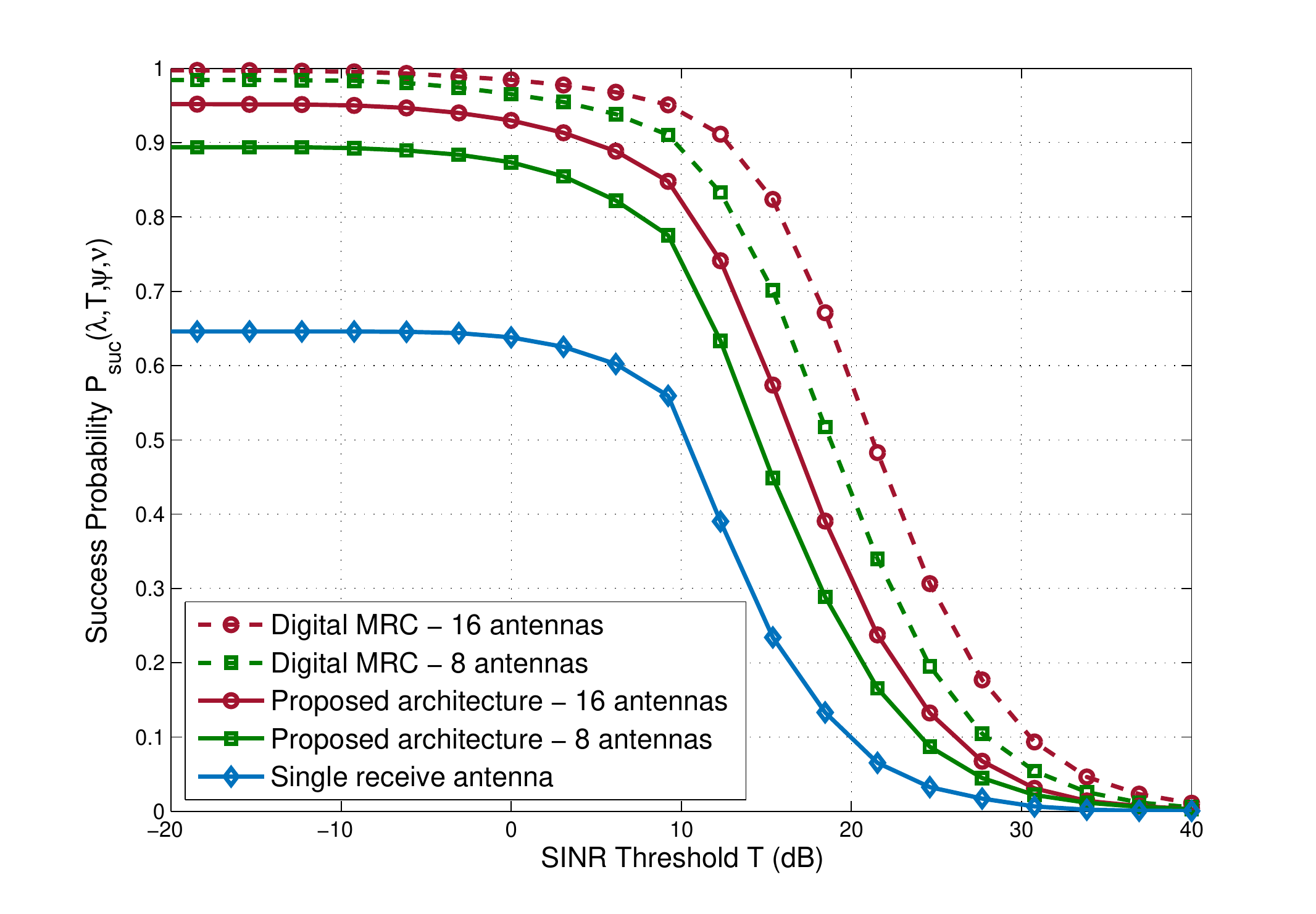}
	}
	\caption{The success probability for different number of receive antennas $N_{\textrm{r}}$ at the user given a fixed transmit beam pattern $A_{15,-15,10^\circ, 350^\circ}$ at the BSs. Proposed low-power architecture achieves good performance compared to superior receiver architectures. Other parameters include $\nu=0.5$, $\psi=-70$ {dB}, $\sigma^2_c=-80$ dB, $\lambda=200/\text{km}^2$, and $P_\textrm{t}=43$ dBm.}
	\label{fig:Cov_Switch}
\end{figure}
\section{Conclusions}\label{secConc}
In this paper, we analyzed the energy harvesting performance at low-power devices powered by a mmWave cellular network. Using a stochastic geometry framework, we derived analytical expressions characterizing the performance of mmWave energy and information transfer in terms of system, channel and network parameters. Simulations results were used to validate the accuracy of the derived expressions.
Leveraging the analytical framework, we also provided useful network and device level design insights. For the connected case when the transmitter and receiver beams are aligned, results show that the energy coverage improves with narrower beams. In contrast, wider beams provide better energy coverage when the receivers are not aligned with a particular transmitter. This trade-off is evident in the more general scenario having both types of receivers, where there typically exists an optimal beamforming beamwidth that maximizes the network-wide energy coverage. 
Moreover, we found that several device-related parameters can significantly impact the system performance. For example, the performance can be substantially improved by optimizing over the power splitting ratio and by leveraging large antenna arrays. 
To allow using multiple antennas
at the mmWave receivers while keeping the power consumption low, we proposed a low-power receiver architecture for mmWave energy and information transfer
using antenna switches. Simulation results show that the proposed architecture
can provide good gains for the overall mmWave energy harvesting performance. Simulation results also reveal that mmWave cellular networks could potentially provide better energy coverage than lower frequency solutions.

\appendices
\allowdisplaybreaks
\section*{Appendix A: Theorem 1}
The following inequality approximates the tail probability of a normalized Gamma distribution.
\begin{lemma}[From \cite{alzer1997some}]\normalfont
For a normalized Gamma random variable $u$ with parameter $N$, the probability $\Pr\left(u<x\right)$ can be tightly upper-bounded by
$\Pr\left(u<x\right) < \left(1-e^{-ax}\right)^N$,
where the constant $x>0$ and $a=N(N!)^{-\frac{1}{N}}$.
\end{lemma}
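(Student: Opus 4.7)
The claim is Alzer's upper bound on the regularized incomplete gamma function: for integer $N \geq 1$, $P(N,x) := \Pr(u \leq x) = \gamma(N,x)/(N-1)! < (1 - e^{-ax})^N$ for $x > 0$, with $a = N/(N!)^{1/N}$. For $N = 1$, $a = 1$ and $P(1,x) = 1 - e^{-x}$, so the bound holds with equality; I therefore focus on $N \geq 2$, in which case $N! \leq N^N$ gives $(N!)^{1/N} \leq N$ and hence $a > 1$.

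My plan is to analyze the gap function $g(x) := (1 - e^{-ax})^N - P(N,x)$ directly on $(0,\infty)$ and show $g > 0$ throughout. First, the boundary behavior. A Taylor expansion at $x = 0$ gives $(1 - e^{-ax})^N = a^N x^N + O(x^{N+1})$ and $P(N,x) = x^N/N! + O(x^{N+1})$; since $a^N - 1/N! = (N^N - 1)/N! > 0$ for $N \geq 2$, we obtain $g(x) > 0$ for sufficiently small $x > 0$, with $g(0^+) = 0$. As $x \to \infty$, using the integer-$N$ representation $P(N,x) = 1 - e^{-x}\sum_{k=0}^{N-1} x^k/k!$, the tail satisfies $1 - P(N,x) \sim x^{N-1} e^{-x}/(N-1)!$, whereas $1 - (1 - e^{-ax})^N \sim N e^{-ax}$; since $a > 1$, the former dominates exponentially, so $g(x) > 0$ for all sufficiently large $x$, and $g(\infty) = 0$.

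With $g$ vanishing at both endpoints and positive near each, it suffices to rule out interior zeros of $g$. A cleaner reformulation sets $\eta(x) := -\log(1 - P(N,x)^{1/N})$; the target inequality is then equivalent to $\eta(x)/x < a$. The endpoint limits are $\eta(x)/x \to 1/(N!)^{1/N}$ as $x \to 0^+$ (from the leading small-$x$ expansion of $P$) and $\eta(x)/x \to 1$ as $x \to \infty$ (from the tail). Both limits lie strictly below $a = N/(N!)^{1/N}$ for $N \geq 2$; it remains to show that $\eta(x)/x$ does not exceed these limiting values in between, which I would do by establishing global monotonicity of the ratio $\eta(x)/x$ on $(0,\infty)$.

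The main obstacle is this monotonicity step, equivalently the claim that $g'(x) = Na e^{-ax}(1 - e^{-ax})^{N-1} - x^{N-1} e^{-x}/(N-1)!$ has exactly one sign change on $(0,\infty)$. Combined with $g(0) = g(\infty) = 0$, a single sign change (from positive to negative) forces $g$ to rise from $0$ to a unique interior maximum and then descend back to $0$, hence $g > 0$ strictly on the interior. I would attack the unique-sign-change property by analyzing the log-ratio of the two summands of $g'$ and exploiting the integer-$N$ expansion of $P(N,x)$ together with log-concavity properties of the incomplete gamma function, mirroring the analytic strategy of Alzer~\cite{alzer1997some}. This is the delicate piece; the rest of the argument is routine endpoint bookkeeping.
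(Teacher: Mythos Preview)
The paper does not prove this lemma; it is quoted directly from Alzer, as the attribution ``From~\cite{alzer1997some}'' signals, and is used purely as a black-box approximation device in Appendix~A. There is therefore no in-paper argument to compare your sketch against.

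Your attempt has two concrete problems. First, you have misread the distribution: in this paper a ``normalized Gamma random variable with parameter $N$'' means $u\sim\Gamma(N,1/N)$ (unit mean; see Section~\ref{secCha}), so $\Pr(u<x)=P(N,Nx)$, not $\gamma(N,x)/(N-1)!$ as you wrote. With the correct normalization your large-$x$ step reverses: the relevant tail comparison is $Ne^{-ax}$ versus $(Nx)^{N-1}e^{-Nx}/(N-1)!$, and since $a=N(N!)^{-1/N}<N$ for $N\geq 2$, the putative upper bound lies \emph{below} the CDF for all sufficiently large $x$, i.e.\ $g(x)<0$ there. This is consistent with Alzer's own theorem, in which $(1-e^{-ax})^N$ with precisely this $a$ is the sharp \emph{lower} bound on the mean-one Gamma CDF for $N>1$; the wireless literature (this paper included) routinely states it with the opposite inequality sign and uses it as a tight two-sided approximation, which is all the downstream derivation actually needs.

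Second, even under your own (unnormalized) reading, the decisive step---that $g'$ changes sign exactly once on $(0,\infty)$---is asserted but not established. Your endpoint analysis is correct, but the single-sign-change property is exactly the analytic content of Alzer's argument, and ``mirroring the analytic strategy of Alzer'' is not a substitute for carrying it out.
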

We write $\textrm{P}_\textrm{con}\left(\lambda,\psi\right)=\Pr\left[Y>\max\left(\frac{\psi}{\xi},\psi_\textrm{min}\right)\right]=\Pr\left[S+I>\hat{\psi}\right]$, where $S=P_\textrm{t}M_\textrm{t}M_\textrm{r}H_0g_0(r_0)$ is the received signal power from the serving BS,
and $I=\sum_{\ell>0,\ell\in\Phi(\lambda)\setminus\mathbb{B}(r_g)}^{} P_\textrm{t}\delta_\ell H_\ell g_\ell(r_\ell)$ is the received signal power from all the other BSs. We can derive the result in Theorem \ref{thm1} by finding the conditional distributions $\textrm{P}_\textrm{con,L}\left(\lambda,\psi\right)$ and $\textrm{P}_\textrm{con,N}\left(\lambda,\psi\right)$. To proceed, first consider the conditional distribution $\textrm{P}_\textrm{con,L}\left(\lambda,\psi\right)=\Pr\left(S+I>\psi|\los\right)$ given the receiver is aligned with a LOS BS (which is indicated by the subscript $\los$ in the following notation).
\begin{small}
\begin{align}\label{temp}
\textrm{P}_\textrm{con,L}\left(\lambda,\psi\right)&\overset{}{=}\mathbb{E}_{S,I|\los}\left[\Pr\left(u<\frac{S+I}{\psi}\right)\right] \nonumber\\ &\overset{(a)}{\approx}\mathbb{E}_{S,I|\los}\left[\left(1-e^{-a\frac{S+I}{\psi}}\right)^{{N}}\right]  \nonumber \\
&\overset{}{=}\mathbb{E}_{S,I|\los}\left[\sum\limits_{k=0}^{N}\left(-1\right)^{k}\binom{N}{k}e^{-a k\frac{S+I}{\psi}}\right] \nonumber\\ &\overset{}{=}\sum\limits_{k=0}^{N}\left(-1\right)^{k}\binom{N}{k}\mathbb{E}_{S,I|\los}\left[e^{-\hat{a} \left(S+I\right)}\right]
\end{align}\end{small}where we have included a dummy random variable $u\, ${\raise.17ex\hbox{$\scriptstyle\mathtt{\sim}$}}$ \,\Gamma\left(N,\frac{1}{N}\right)$ in the first equation. Note that $u$ converges to $1$ as $N\to\infty$. Therefore, this substitution is in fact an approximation when $N$ is finite. The introduction of $u$ allows leveraging the inequality in Lemma 5, which leads to $(a)$, where the constant $a=N(N!)^{-\frac{1}{N}}$. The last equation follows from the Binomial series expansion of $(b)$, and by further substituting $\hat{a}=\frac{ak}{\psi}$. To evaluate the expectation in (\ref{temp}), consider
\begin{align}\label{temp2}
&\mathbb{E}_{S,I|\los}\left[e^{-\hat{a} \left(S+I\right)}\right]=\mathbb{E}_{S|\los}\left[e^{-\hat{a}S} \mathbb{E}_{I|{S,\los}}\left[e^{-\hat{a}I}\right]\right]. 
\end{align}
The inner expectation in (\ref{temp2}) can be simplified by applying the thinning theorem for a PPP \cite{haenggi2012stochastic}. Note that $\Phi$ can be independently thinned into two PPPs $\Phi_\los$ and $\Phi_\nlos$, where the former comprises the LOS BSs whereas the latter consists of NLOS BSs.
Therefore, we can interpret $\Phi_\mathrm{L}$ and $\Phi_\mathrm{N}$ as two independent tiers of BSs. The user will be tagged with either the closest BS in $\Phi_\mathrm{L}$ or in $\Phi_\mathrm{N}$, whichever maximizes the average received power at the user.
We can further thin $\Phi_\los$ into four independent PPPs $\{\Phi_\los^i\}_{i=1}^{4}$, where each resulting PPP $\Phi_\los^i$ contains BSs that correspond to a nonzero directivity gain $D_i$ with $p_i$ being the thinning probability.
This follows because the beam orientations are assumed to be independent across links. Thus, a link can have a directivity gain of $D_i$ with probability $p_i$ independently of other links.
We let the received power due to the transmission from the BSs in $\Phi_\los^i$ be $I^i_\los$.
Likewise, $\Phi_\nlos$ can be split into $\{\Phi_\nlos^i\}_{i=1}^{4}$ with the corresponding received powers denoted by $\{I^i_\nlos\}_{i=1}^{4}$. Since the resulting PPPs are independent, (\ref{temp2}) can be simplified as
\begin{small}
\begin{align}\label{temp4}
\mathbb{E}_{I|S,\los}\left[e^{-\hat{a}I}\right]=\prod\limits_{i=1}^{4} \mathbb{E}_{I|{S,\los}}\left[e^{-\hat{a}I^{i}_\los}\right]
\prod\limits_{j=1}^{4}\mathbb{E}_{I|{S,\los}}\left[e^{-\hat{a}I^{j}_{\nlos}}\right]
\end{align}
\end{small}where 
\begin{small}
\begin{align}\label{temp3}
&\mathbb{E}_{I|S,\los}\left[e^{-\hat{a}I^{i}_{\los}}\right]
\overset{(a)}{=}\mathbb{E}_{\Phi_{\mathrm{L}}^i|r_o}\left[\prod\limits_{\ell\in\Phi_{\mathrm{L}}^i\setminus\mathbb{B}\left(r_o\right)} \mathbb{E}_{H_{\ell}}\left[e^{-\hat{a}P_\textrm{t}H_{\ell}D_iC_{\mathrm{L}}{r_{\ell}}^{-{\alpha}_{\mathrm{L}}}}\right]\right] \nonumber\\
&\overset{(b)}{=}\mathbb{E}_{\Phi_{\mathrm{L}}^i|r_o}\left[\prod\limits_{\ell\in\Phi_{\mathrm{L}}^i\setminus\mathbb{B}\left(r_o\right)} \left(\frac{1}{1+\hat{a}P_\textrm{t}D_iC_{\mathrm{L}} {r_{\ell}}^{-{\alpha}_{\mathrm{L}}}{N_\mathrm{L}}^{-1} } \right)^{N_\mathrm{L}}\right]\nonumber\\
&=e^{-2\pi\lambda p_i\int\limits_{r_o}^{\infty}\left(1-\left(\frac{1}{1+\hat{a}P_\textrm{t}D_iC_{\mathrm{L}} {t}^{-{\alpha}_{\mathrm{L}}}{N_\mathrm{L}}^{-1} }\right)^{N_{\mathrm{L}}}\right)p(t)t\text{d}t }
\end{align}\end{small}where $(a)$ follows by conditioning on the length $r_o$ of the serving LOS link, and by further noting that small-scale fading is independent across links. Here, $\mathbb{B}\left(r_o\right)$ denotes a circular disc of radius $r_o$ centered at the typical user. $(b)$ is obtained by using the moment generating function of a normalized Gamma random variable, while the last equation follows by invoking the probability generating functional\cite{haenggi2012stochastic} of the PPP $\Phi_L^{i}$. 
Substituting (\ref{temp3}) in the first (left) product term of (\ref{temp4}) yields (\ref{fac1}).
Similarly, $\mathbb{E}_{I|S,\los}\left[e^{-\hat{a}I^{i}_{\mathrm{N}}}\right]$ is given by
\begin{small}
\begin{align}\label{temp5}
&\mathbb{E}_{\Phi_\los^i,H|r_o}\left[e^{-\hat{a}\sum\limits_{\ell\in\Phi_{\mathrm{N}}^i\setminus \mathbb{B}\left(\rho_{\mathrm{L}}(r_o)\right)}P_\textrm{t}H_{\ell}D_iC_\mathrm{N}{r_\ell}^{-\alpha_{\mathrm{N}}}}\right]\nonumber\\
&=e^{-2\pi\lambda p_i\int\limits_{\rho_{\mathrm{L}}(r_o)}^{\infty}\left(1-\left(\frac{1}{1+\hat{a}P_\textrm{t}D_iC_{\mathrm{N}} {x}^{-{\alpha}_{\mathrm{N}}}{N_\mathrm{N}}^{-1} }\right)^{N_{\mathrm{N}}}\right)\left(1-p(t)\right)t\text{d}t }.
\end{align}
\end{small}By substituting (\ref{temp5}) in the second (right) product term of (\ref{temp4}) yields (\ref{fac2}). Using the expressions in (\ref{temp4})--(\ref{temp5}) in (\ref{temp2}), and by further evaluating the expectation of the resulting expression with respect to $S$, we obtain
\begin{small}
\begin{align}
&\int\limits_{r_g}^{\infty}\left({\frac{1}{1+\hat{a}P_\textrm{t}M_{\textrm{t}}M_{\textrm{r}}C_{\textrm{L}}{{r}^{-{\alpha}_{\mathrm{L}}}}{ {N_{\mathrm{L}}}^{-1} } } } \right)^{N_{\mathrm{L}}} 
\nonumber\\
&\quad\qquad\times e^{-{\overset{}{\Upsilon}}_{k,1}\left(\lambda,\psi,r\right) -{\overset{}{\Upsilon}_{k,2}\left(\lambda,\psi,\rho_\los(r)\right)}}  {\tilde{\tau}}_{\mathrm{L}}(r) \text{d}r
\end{align}
\end{small}where we have again used definition of the moment generating function of a normalized Gamma distribution. ${\overset{}{\Upsilon}}_{k,1}\left(\cdot\right)$ and ${\overset{}{\Upsilon}}_{k,2}\left(\cdot\right)$ are given in (\ref{fac1}) and (\ref{fac2}) respectively, $r_g$ denotes the minimum link distance, while the distance distribution is provided in Lemma 3. Using (\ref{temp}) and (\ref{temp2}), we can thus retrieve the expression in (\ref{los}).
We can similarly derive the conditional distribution $\textrm{P}_\textrm{con,N}\left(\lambda,\psi\right)=\Pr\left(S+I>\psi|\nlos\right)$ in (\ref{nlos}) for the NLOS case.
\balance
\section*{Appendix B: Proposition 1}
To simplify the analysis, we approximate the LOS probability function $p(r)$ by a step function $p(r)=\mathbbm{1}_{\{0<r<R_B\}}$, i.e., the BSs within a LOS ball of radius $R_B$ are marked LOS with probability 1, while the rest as NLOS\cite{bai2015}.     
The radius $R_B$ is chosen such that the LOS association probability $\varrho_\mathrm{L}$ remains the same (as in our original model). Using a step function for $p(r)$, it follows from Lemma 2 that $\varrho_\mathrm{L}=1-e^{-\lambda\pi R_B^2}$ and $R_B=\left(\frac{\ln\left(1-\varrho_\mathrm{L}\right)}{\lambda\pi}\right)^{0.5}$. Moreover, we neglect small scale fading for all the links except for the serving BS. We also ignore the NLOS signals in the analysis. This effectively leads to a scenario where the user receives signals from the BSs within the LOS ball only. Intuitively, this would be the likely scenario in sufficiently dense networks. 
We only list the key steps since the rest of the proof follows from Appendix A. Ignoring the NLOS signals, we approximate (\ref{temp4}) as $\mathbb{E}_{I|S,\los}\left[e^{-\hat{a}I}\right]\approx\prod\limits_{i=1}^{4} \mathbb{E}_{I|{S,\los}}\left[e^{-\hat{a}I^{i}_\los}\right]$ where
\begin{small}
\begin{align}\label{emp3}
\mathbb{E}_{I|S,\los}\left[e^{-\hat{a}I^{i}_{\los}}\right]&\overset{}{=}\mathbb{E}_{\Phi_{\mathrm{L}}^i|r_o}\left[\prod\limits_{\ell\in\Phi_{\mathrm{L}}^i\cap[\mathbb{B}\left(R_B\right)\setminus\mathbb{B}\left(r_o\right)]} e^{-\hat{a}P_\textrm{t}D_iC_{\mathrm{L}}{r_{\ell}}^{-{\alpha}_{\mathrm{L}}}}\right]\nonumber\\
&\overset{(a)}{=}e^{-2\pi\lambda p_i\int\limits_{r_o}^{R_B}\left(1-
e^{-\hat{a}P_\textrm{t}D_iC_{\mathrm{L}}{t}^{-{\alpha}_{\mathrm{L}}}}
\right)t\text{d}t } \nonumber\\
&\overset{(b)}{=}e^{-\frac{2\pi\lambda p_i}{\alpha_\los}\int_{W_{ik}r_o^{-\alpha_\los}}^{W_{ik}R_B^{-\alpha_\los}}\frac{1-e^{-v}}{v^{1+\frac{2}{\alpha_\los}}}\text{d}v }\nonumber\\
&\overset{(c)}{=}
e^{-\frac{2\pi\lambda p_iW_{ik}^{\frac{2}{\alpha_\los}}}{\alpha_\los}\Gamma\left(\frac{-2}{\alpha_\los}; W_{ik}r_o^{-\alpha_\los}, W_{ik}R_B^{-\alpha_\los}\right)}\nonumber\\
&\hspace{0.5in}\times e^{-{\pi\lambda p_i\left(R_B^2-r_o^2\right)}}.
\end{align}
\end{small}Here, (a) follows by ignoring the small scale fading and invoking the probability generating functional\cite{haenggi2012stochastic} of PPP, (b) by a change of variables, and (c) by the definition of the generalized incomplete Gamma function. The result in Proposition 1 is obtained by assuming $p_5=0$, and by further noting that the distance distribution simplifies to ${\tau_\los(x)}=\frac{2\pi\lambda x}{\varrho_\los}e^{-\lambda\pi x^2}$ due to the LOS ball approximation.
\section*{Appendix C: Corollary 1}
We derive Corollary 1 by finding the conditional means $\bar{P}_\los=\mathbb{E}\left[S+I|\los\right]$ and $\bar{P}_\nlos=\mathbb{E}\left[S+I|\nlos\right]$.
$\bar{P}_\los$ can be evaluated by conditioning on the link distance $r_0$ from the serving BS as follows. 
\begin{small}
\begin{align}\label{appb}
&\mathbb{E}_{}\left[S+I|r_0,\los\right]=
\mathbb{E}_{}\left[S|r_0,\los\right] +\sum_{i=1}^{4}\mathbb{E}_{}\left[I_\los^i+I_\nlos^i|r_0,\los\right] \nonumber\\
&\overset{(a)}{=}P_\textrm{t} M_{\textrm{t}}M_{\textrm{r}}C_{\mathrm{L}} {r_0}^{-{\alpha}_{\mathrm{L}}}+\sum_{i=1}^{4}2\pi\lambda P_\textrm{t}p_iD_i C_{\mathrm{L}}\int\limits_{r_0}^{\infty}t^{-\left({\alpha}_{\mathrm{L}}-1\right)}p(t)\text{d}t\nonumber\\
&+\sum_{i=1}^{4}2\pi\lambda P_\textrm{t} p_i D_i C_\nlos\left[\displaystyle\frac{\left(\rho_\los(r_0)\right)^{-(\alpha_\nlos-2)}}{\alpha_\nlos-2}
-\int\limits_{{\rho}_\los(r_0)}^{\infty} t^{-\left({\alpha}_\nlos-1\right)}p(t)\text{d}t\right] \nonumber\\
&=P_\textrm{t} M_{\mathrm{t}} M_{\mathrm{r}} C_\los {r_0}^{-\alpha_\los}+{\Psi}_\los\left(r_0\right)+{\Psi}_\nlos\left(\rho_\los(r_0)\right)
\end{align}
\end{small}where $(a)$ is obtained by averaging over the fading distribution, followed by invoking Campbell's theorem\cite{haenggi2012stochastic}, while (\ref{appb}) follows from the definitions of ${\Psi}_\los$ and ${\Psi}_\nlos$ provided in (\ref{fac3}) and (\ref{fac4}) respectively. Taking expectation of $\mathbb{E}_{}\left[S+I|r_0,\los\right]$ with respect to $r_o$ using Lemma 3 yields (\ref{alos}). The expression for $\bar{P}_\nlos$ is (\ref{anlos}) can be derived using similar steps.
\section*{Appendix D: Theorem 3}
From (\ref{eq1}), it follows that the harvested energy $\gamma=\xi Y\mathbbm{1}_{\{Y>\psi_\textrm{min}\}}$. Let $Y=(1-\nu)\left(S+I+\sigma^2\right)$, where
$S=
P_\textrm{t}M_\textrm{t}M_\textrm{r} H_0 g_0(r_0)$ and $I=\sum_{\ell>0,\ell\in\Phi(\lambda)\setminus\mathbb{B}(r_g)}^{}
P_\textrm{t}\delta_\ell H_\ell g_\ell(r_\ell)$ respectively denote the contributions from the serving and the interfering BSs. To find $P_\text{suc}\left(\lambda,T,\psi,\nu\right)=\Pr\left[\text{SINR}>T,\gamma>{\psi}\right]$, consider
\begin{small}
\begin{align}\label{eq:finalapp}
\Pr&\left[\frac{\nu S}{\nu(I+\sigma^2)+\sigma_c^2}>T, (1-\nu)\left(S+I+\sigma^2\right)>\hat{\psi}\right] \nonumber \\
&\overset{(a)}{=}{\mathbb{E}}_{I}\left[\Pr\left[S>T\left(I+\sigma^2+\frac{\sigma^2_c}{\nu}\right), S> \frac{\hat{\psi}}{\left(1-\nu\right)}-I-\sigma^2\right]\right] \nonumber \\
&\overset{(b)}{=}{\mathbb{E}}_{I}\left[\Pr\left[S>T\left(I+\sigma^2+\frac{\sigma^2_c}{\nu}\right)\right]\bigg|I>\mu\right]\Pr\left[I>\mu\right]\nonumber\\
&\quad\qquad+{\mathbb{E}}_{I}\left[\Pr\left[S> \frac{\hat{\psi}}{\left(1-\nu\right)}-I-\sigma^2\right]\bigg|I\leq\mu\right]
\Pr\left[I\leq\mu\right] \nonumber \\
&\overset{(c)}{\approx}{\mathbb{E}}_{I}\left[\Pr\left[S>T\left(I+\sigma^2+\frac{\sigma^2_c}{\nu}\right)\right]\right]\Pr\left[I>\mu\right]\nonumber\\
&\qquad\qquad\qquad+{\mathbb{E}}_{I}\left[\Pr\left[S> \frac{\hat{\psi}}{\left(1-\nu\right)}-I-\sigma^2\right]\right]
\Pr\left[I\leq\mu\right] \nonumber \\
&\overset{}{=}P_{\text{cov}}(\lambda,T,\nu)\tilde{P}_{\textrm{con}}(\lambda,\mu)
+{P}_{\text{con}}(\lambda,\varphi)\left[1-\tilde{P}_{\textrm{con}}(\lambda,\mu)\right]
\end{align}
\end{small}where the expectation in ($a$) is with respect to the interference $I$, ($b$) is obtained by further conditioning on $I$ to be greater (or smaller) than  a parameter $\mu$ 
which follows from the inequality
$T\left(I+\sigma^2+\frac{\sigma^2_c}{\nu}\right)>\frac{\hat{\psi}}{\left(1-\nu\right)}-I-\sigma^2$. The approximation (or effectively an upperbound) in ($c$) results from dropping the conditions $I>\mu$ or $I\leq\mu$ while calculating the expectation. Finally, the SINR coverage probability $P_{\textrm{cov}}(\lambda,T,\nu)$ follows from Lemma \ref{lem4}, and the energy coverage probability $P_{\text{con}}(\lambda,\varphi)$ from Theorem \ref{thm1}.
$\tilde{P}_{\textrm{con}}(\lambda,\mu)$ is the interference CCDF evaluated at $\mu$.

\bibliographystyle{ieeetr}

\end{document}